\newcommand{\be}{\begin{equation}}
\newcommand{\ee}{\end{equation}}
\newcommand{\ba}{\begin{aligned}}
\newcommand{\ea}{\end{aligned}}
\newcommand{\bc}{\begin{center}}
\newcommand{\ec}{\end{center}}
\newcommand{\beq}{\begin{equation}}
\newcommand{\eeq}{\end{equation}}
\newcommand{\beqq}{\begin{equation*}}
\newcommand{\eeqq}{\end{equation*}}
\newcommand{\beqa}{\begin{align}}
\newcommand{\eeqa}{\end{align}}
\newcommand{\barr}{\begin{array}}
\newcommand{\earr}{\end{array}}
\newcommand{\bi}{\begin{itemize}}
\newcommand{\ei}{\end{itemize}}
\newtheorem{lem}{Lemma}
\newcommand{\abs}[1]{\left|#1\right|}
\begin{document}


\bibliographystyle{apsrev}


\title{Quantum weak coin flipping with a single photon}


\author{Mathieu Bozzio}
\affiliation{Sorbonne Université, CNRS, LIP6, 4 Place Jussieu, F-75005 Paris, France}\affiliation{ Institut Polytechnique de Paris, T\'el\'ecom Paris, LTCI, 19 Place Marguerite Perey, 91129 Palaiseau, France}

\author{Ulysse Chabaud}
\affiliation{Sorbonne Université, CNRS, LIP6, 4 Place Jussieu, F-75005 Paris, France}

\author{Iordanis Kerenidis}
\affiliation{Universit\'e de Paris, CNRS, IRIF, 8 Place Aurélie Nemours, 75013 Paris, France}

\author{Eleni Diamanti}
\affiliation{Sorbonne Université, CNRS, LIP6, 4 Place Jussieu, F-75005 Paris, France}


\begin{abstract}

\textit{Weak coin flipping} is among the fundamental cryptographic primitives which ensure the security of modern communication networks.  It allows two mistrustful parties to remotely agree on a random bit when they favor opposite outcomes. Unlike other two-party computations, one can achieve information-theoretic security using quantum mechanics only: both parties are prevented from biasing the flip with probability higher than $1/2+\epsilon$, where $\epsilon$ is arbitrarily low. Classically, the dishonest party can always cheat with probability $1$ unless computational assumptions are used. Despite its importance, no physical implementation has been proposed for quantum weak coin flipping. Here, we present a practical protocol that requires a single photon and linear optics only. We show that it is fair and balanced even when threshold single-photon detectors are used, and reaches a bias as low as $\epsilon=1/\sqrt{2}-1/2\approx 0.207$. We further show that the protocol may display quantum advantage over a few hundred meters with state-of-the-art technology.

\end{abstract}


\maketitle


\section{Introduction}

   Modern communication networks are continuously expanding, as the number of users and available online resources increases. On a daily basis, users must inevitably trust local network nodes and transmission channels in order to perform sensitive tasks such as private data transmission, online banking, electronic voting, delegated computing and many more. A complex network can be secured by relying on a collection of simpler cryptographic \textit{primitives}, or building blocks, which are combined to guarantee overall security. Strong coin flipping (SCF) is one of such primitives, in which two parties remotely agree on a random bit such that none of the parties can bias the outcome with probability higher than $1/2+\epsilon$, where $\epsilon$ is the protocol bias. SCF is fundamental in multiparty computation \cite{GMW:STOC87}, online gaming and more general randomized consensus protocols involving leader election \cite{AAK:DC18}.

   Weak coin flipping (WCF) is a version of coin flipping in which both parties have a preferred, opposite outcome: it effectively designates a winner and a loser. In the classical world, information-theoretically secure SCF and WCF are impossible: they require computational assumptions or trusting a third party \cite{B:ACM83,C:STOC86,A:STOC04,BBB:PRA09}. Using quantum properties, on the other hand, enables information-theoretically secure SCF and WCF:  the lowest possible bias for quantum SCF is $\epsilon = 1/\sqrt{2}-1/2$ \cite{K:PC03}, while quantum WCF may achieve a bias arbitrarily close to zero \cite{M:arx07,ACG:SIAM16}. Remarkably, quantum WCF is also involved in the construction of optimal quantum SCF and quantum bit commitment schemes \cite{IEEE:CK09,IEEE:CK11}. Although the works from \cite{M:arx07,ACG:SIAM16} proved the existence of quantum WCF protocols achieving arbitrarily low biases, no explicit protocol had been provided. In 2002, two explicit protocols with small biases were proposed: the work from \cite{KN:IPL04} achieved $\epsilon\approx0.239$, while \cite{SR:PRL02} achieved $\epsilon=1/\sqrt{2}-1/2\approx0.207$, which is incidentally the SCF lower bound. Later, it was shown that the scheme from \cite{SR:PRL02} in fact belonged to a larger family of WCF protocols with $\epsilon=1/6\approx0.167$ \cite{M:FOCS04,M:PRA05}. Very recently, a new explicit family of protocols achieved $\epsilon\approx1/10$ \cite{ARW:arx18}, followed by arbitrarily low biases \cite{ARV:arx19}.

   While quantum SCF protocols have been experimentally demonstrated \cite{MTV:PRL05,BBB:NC11,PJL:NC14}, no implementation has been proposed for quantum WCF. This may be explained by two reasons. First, it is difficult to find an encoding and implementation which is robust to losses: a dishonest party may always declare an abort when they are not satisfied with the flip's outcome. Second, none of the previously mentioned protocols translate trivially into a simple experiment: they involve performing single-shot generalized measurements \cite{SR:PRL02} or generating beyond-qubit states \cite{KN:IPL04}.

   In this work, we introduce a family of quantum WCF protocols, inspired by \cite{SR:PRL02}, which achieve biases as low as $\epsilon = 1/\sqrt{2}-1/2\approx 0.207$. These protocols involve simple projective measurements instead of generalized ones, require a single photon and linear optics only, and need at most three rounds of communication between the parties. The information is encoded by mixing the single photon with vacuum on an unbalanced beam splitter, which generates entanglement between the photon number modes \cite{MBH:PRL13}: both parties may then agree on a random bit, while the entanglement is simultaneously verified. We also use a version of our schemes to construct a quantum SCF protocol with bias $\approx 0.31$. We further derive a practical security proof for both number-resolving and threshold single-photon detectors, considering the extension to infinite Hilbert spaces. Since the presence of losses may enable classical protocols to reach lower cheating probabilities than quantum protocols, we finally show that our fair and balanced quantum protocol bears no classical equivalent over a few hundred meters of lossy optical fiber and non-unit detection efficiency. 


\section{Protocol and correctness} In the honest protocol, Alice and Bob wish to toss a fair coin, with a priori knowledge that they each favor opposite outcomes. Fig.~\ref{fig:protocol} represents the implementation of the honest protocol, which follows five distinct steps. Defining $x\in[0,\frac12]$ as a free protocol parameter, these read:
\begin{itemize}
\item Alice mixes a single photon with the vacuum on a beam splitter of reflectivity $x$.
\item Alice keeps the first spatial mode, and directs the second spatial mode to Bob.
\item Bob mixes the state he receives with the vacuum on a beam splitter of reflectivity $y=1-\frac1{2(1-x)}$.
\item Bob measures the second register of his state with a single-photon detector, and broadcasts the outcome $c\in\{0,1\}$.
\item The last step is a verification step, which splits into two cases. If $c=0$, Alice sends her state to Bob, who mixes it with his state on a beam splitter of reflectivity $z=2x$. He then measures the two output modes with single-photon detectors. He declares Alice the winner if the outcome $(b_1,b_0)=(1,0)$ is obtained. If $c=1$: Bob discards his state, and Alice measures her state with a single-photon detector. She declares Bob the winner if the outcome is $a=0$.
\end{itemize}
\begin{figure}
\begin{center}
\includegraphics[width=1\columnwidth]{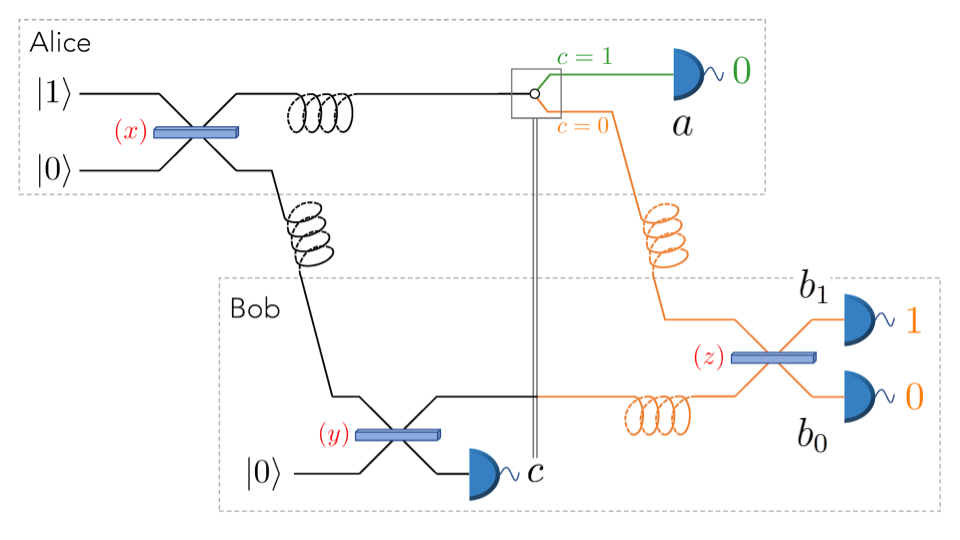}
\caption{\textbf{Representation of the honest protocol.} The dashed boxes indicate Alice and Bob's laboratories, respectively. The beam splitter reflectivities are indicated in red brackets. $\ket{0}$ and $\ket{1}$ are the vacuum and single photon Fock states, respectively. Curly lines represent fiber used for quantum communication from Alice to Bob, or delay lines within Alice's or Bob's laboratory, when waiting for the other party's communication. Bob broadcasts the classical outcome $c$, which controls an optical switch on Alice's side. The protocol when Bob declares $c=0/1$ is represented in orange/green. The final outcomes $(a,b_1,b_0)=(0,1,0)$ are the expected outcomes when both parties are honest.}
\label{fig:protocol}
\end{center}
\end{figure}
We show that the protocol is fair, i.e.\@ that the probability of winning for each party is $\frac12$ when they are both honest.

Single photons are quantized excitations of the electromagnetic field, which are described by the action of the creation operator onto the vacuum. Beam splitters act linearly on creation operators, and leave invariant the vacuum. Hence, the evolution of the quantum state over the three modes up to Bob's measurement reads:
\be
\ba
\ket{100}&\underset{(x),12}\to\sqrt{x}\ket{100}+\sqrt{1-x}\ket{010}\\
&\underset{(y),23}\to\sqrt{x}\ket{100}+\sqrt{(1-x)y}\ket{010}\\
&\quad\quad\quad\quad+\sqrt{(1-x)(1-y)}\ket{001},
\ea\label{eq:evolution2}
\ee
where the notation $(t),kl$ indicates the reflectivity of the beam splitter and the corresponding spatial modes.
Hence, the probability that Bob obtains outcome $c=1$ when measuring the third register is $P(1)=(1-x)(1-y)$, while the probability of outcome $c=0$ is $P(0)=1-P(1)$. Having set $y=1-\frac1{2(1-x)}$ ensures $P(0)=P(1)=\frac12$. When $c=1$, the state on modes $1$ and $2$ is projected onto $\ket{00}$, while $c=0$ projects the state onto $\sqrt{2x}\ket{10}+\sqrt{1-2x}\ket{01}$. In the first case, the measurement performed by Alice outputs $a=0$ with probability $1$. In the second case, the measurement performed by Bob after the beam splitter with reflectivity $z$ outputs $(b_1,b_0)=(1,0)$ with probability $1$. Hence, the probability that Alice (resp.\@ Bob) wins is directly given by $P_h^{(A)} = P(0)$ (resp.\@ $P_h^{(B)} = P(1)$). This shows that the protocol is fair, since $P(0)=P(1)=\frac12$.


\section{Security} We now derive the security of the protocol. Namely, we obtain the probabilities of winning when Bob is dishonest and Alice is honest, and vice versa.

\subsection{Dishonest Bob, honest Alice}

Dishonest Bob should always declare the outcome $c=1$ in order to maximize his winning probability. The outcome of the coin flip is then confirmed if Alice obtains the outcome $a=0$ upon verification. Bob thus needs to maximize the probability of the outcome $a=0$, applying a general quantum operation to his half of the state. However, the probability that the detector clicks is independent of Bob's action. It is given by $x$, so that Bob's winning probability is upper bounded by $(1-x)$. This upper bound is reached if Bob discards his half of the state and broadcasts $c=1$. Then, Bob's optimal cheating probability is $P_d^{(B)}=1-x$.

\subsection{Dishonest Alice, honest Bob}

 Alice wins when Bob declares $c=0$ and the outcome of his quantum measurement is $(b_1,b_0)=(1,0)$. The most general strategy of dishonest Alice is to send a (mixed) state $\sigma$, while Bob performs the rest of the protocol honestly. We find that the security is derived easily if Bob is allowed photon number resolving detectors (see Appendix \ref{sec:secu} for details of all the proofs). 
 
 Remarkably, the protocol is still secure even when Bob only uses threshold detectors, which is essential to the practicality of the protocol. Moreover, Alice's optimal cheating probability remains the same in both cases: $P_d^{(A)}=1-(1-y)(1-z)$, which equals $\frac{1}{2(1-x)}$ for $y=1-\frac{1}{2(1-x)}$ and $z=2x$. In particular, for all values of $x$, we retrieve the property shared by the protocols of~\cite{SR:PRL02}: $P_d^{(A)}P_d^{(B)}=\frac12$.
 
 The underlying idea in the security analysis for threshold and number resolving detectors is that Alice must generate the state which maximizes the overlap with Bob's projectors $\ket{100}\bra{100}$ and $\sum_{n=1}^{\infty}\ket{n00}\bra{n00}$, respectively. Setting $x=1-1/\sqrt2$, we obtain a version of the protocol which is balanced, i.e.\@ both players have the same cheating probability $1/\sqrt2$. The protocol bias is then $\epsilon=1/\sqrt2-1/2\approx0.207$.

Moreover, following~\cite{IEEE:CK09}, we show in Appendix \ref{sec:SCF} that a suitable choice of parameters $x$,$y$,$z$ yielding an unbalanced quantum WCF protocol allows to construct a quantum SCF protocol with bias $\approx0.31$.


\section{Fault tolerance}
\subsection{Noise} 
We now investigate how imperfect state generation, non-ideal beam splitters and single-photon detector dark counts affect the correctness and security of the protocol. While we fixed the parameter values to $y=1-\frac1{2(1-x)}$ and $z=2x$ in the ideal setting, we now allow the three parameters $x$, $y$, $z$ to vary freely.

The vacuum/single-photon encoding is very robust to noise, in comparison to polarization or phase encoding for instance: the only property that must be preserved through propagation is photon number. Alice may simply produce a heralded single photon via spontaneous parametric down-conversion (SPDC) \cite{C:CP18}, which generates a photon pair: one may be used for the flip, while the other may herald the presence of the first one. Given the photon-pair emission probability $p$, accidentally emitting two pairs at the same time using SPDC occurs with probability $p^2$. Since $p$ may be arbitrarily tuned by changing the pump power, $p^2$---and therefore the probability of two photons being accidentally generated by Alice at once---may then be decreased to negligible values.

Note that, in the case where Alice's single photon source is probabilistic but heralded (as in SPDC), she may always inform Bob of a successful state generation prior to his announcement of $c$ without compromising security. In what follows, we may therefore assume that both parties have agreed on the presence of an initial state, and hence know when the protocol occurs.

Noise will therefore stem from the non-ideal reflectivities of the beam splitters, and the non-zero detector dark count probability $p_{dc}$. For each party, these may affect the protocol correctness in two ways: an undesired bias of the flip, and an added abort probability during the verification process.

Deviations on the beam splitter reflectivities $x$, $y$, and $z$ will first change the honest winning probabilities: these may be re-calculated by replacing the ideal reflectivity $r\in\{x,y\}$ with an imperfect $r'$. As regards to honest aborts, a beam splitter with reflectivity $z'$ instead of $z$ may be applied on the resulting state when $c=0$. Noisy detectors may cause an unwanted abort corresponding to a click because of dark counts. However, with superconducting nanowire single-photon detectors, this probability is typically very low, of the order of $p_{dc}<10^{-8}$ \cite{H:NP09}. 

We can therefore conclude that any source of noise may be incorporated in the security analysis by simply replacing parameters $x$, $y$, and $z$ with $x'$, $y'$, and $z'$. Furthermore, this source of error will most likely be negligible with current technology. We therefore solely focus on the more consequential effects of losses.

\subsection{Losses} Losses can be due to non-unit channel and delay line transmissions, as well as non-unit detection efficiency. We label $\eta_t$ the transmission efficiency of the quantum channel from Alice to Bob. We also define as $\eta_f^{(i)}$ the transmission of party $i$'s fiber delay, while $\eta_d^{(i)}$ denotes the detection efficiency of party $i$'s single-photon detectors. Here, we assume the efficiencies of Bob's detectors to be the same, and that each party introduces a fiber delay whenever they are waiting for the other party's communication. The delay time therefore depends on the distance between the two parties. 

In the presence of losses, the protocol may also abort when both parties are honest, when the photon is lost. We derive in Appendix~\ref{sec:Phlossy} the expressions for the two honest winning probabilities $P_h^{(A)}$ and $P_h^{(B)}$, and hence the probability $P_{ab}$ of abort, in the presence of losses:
\begin{equation}
    \begin{aligned}
    P_h^{(A)} &=\eta_t\eta_d^{(B)}\left(\sqrt{xz\eta_f^{(A)}}+\sqrt{(1-x)y(1-z)\eta_f^{(B)}}\right)^2\\
    P_h^{(B)} &=\eta_t\eta_d^{(B)}(1-x)(1-y)\\
    P_{ab}& = 1-P_h^{(A)}-P_h^{(B)}.
    \end{aligned}
\label{eq:loss_correct}
\end{equation}
Note that the overall correctness does not depend on Alice's detection efficiency $\eta_d^{(A)}$, since the declaration of outcome $c$ depends solely on Bob's detector, and the verification step on Alice's side involves detecting vacuum.


\section{Security in the presence of losses} Dishonest Bob's best strategy is to perform the same attack as in the lossless case, because he has no control over Alice's half of the subsystem. His winning probability is then $P_d^{(B)}=1-x\eta_f^{(A)}\eta_d^{(A)}$. However, in a more general game-theoretic scenario, Bob's best strategy will in fact depend on the rewards and sanctions associated with honest aborts and "getting caught cheating" aborts. In other words, Bob has to minimize his risk-to-reward ratio. Maximizing his winning probability makes him run the risk of getting caught cheating with probability $ x\eta_f^{(A)}\eta_d^{(A)}$.

Dishonest Alice must still generate the state which maximizes the $(b_1,b_0,c)=(1,0,0)$ outcome on Bob's detectors after his honest transformations have been applied. However, the expression for Bob's corresponding projector now changes, as there is a finite probability $(1-\eta_d^{(B)})^n$ that the $n$-photon component is projected onto the vacuum. The $0$ outcome on one spatial mode is therefore triggered by the projection $\Pi_0=\sum_{n=0}^{\infty}(1-\eta_d^{(B)})^n\ket{n}\bra{n}$. The total projector responsible for the $(b_1,b_0,c)=(1,0,0)$ outcome then reads $\Pi_{100}=\left(\mathbb{1}-\Pi_0\right)\otimes\Pi_0\otimes\Pi_0$. We show in Appendix~\ref{sec:seculoss} that dishonest Alice's maximum winning probability $P_d^{(A)}$ satisfies:
%
%
\be
\ba
\max_{l>0}&\left[\left(1-(1-y\eta_f^{(B)})(1-z)\eta_d^{(B)}\right)^l-\left(1-\eta_d^{(B)}\right)^l\right]\\
&\leqslant 1-(1-y)(1-z).
\ea
\ee
%
%

The value of the upper bound on the right hand side is Alice's cheating probability in the lossless case. This shows that Alice cannot take advantage of Bob's imperfect detectors or his lossy delay line in order to increase her cheating probability. We now provide a sketch of the proof: since passive linear optical elements act linearly on creation operators, equal losses on different modes may be commuted through the interferometer of the protocol. This allows to upper bound Alice's maximum winning probability by her winning probability in an equivalent picture in which the losses happen just after her state preparation, then followed by a lossless protocol. In that case, it is as if dishonest Alice was trying to cheat in the lossless protocol, while being restricted to lossy state preparation instead of ideal state preparation.
%
%
%

\begin{figure}
	\begin{center}
		\includegraphics[width=\columnwidth]{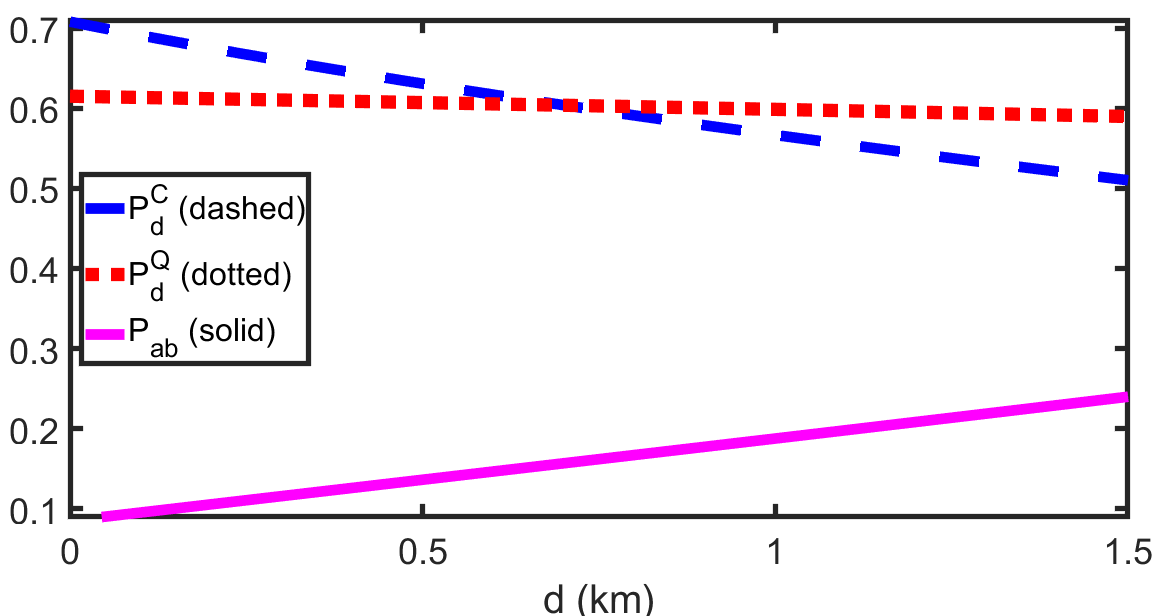}
		\caption{\textbf{Practical quantum advantage for a fair and balanced protocol.} Numerical values for the lowest classical and quantum cheating probabilities, $P_d^{C}$ and $P_d^{Q}$, are plotted as a function of distance $d$ in dashed blue and dotted red, respectively. Honest abort probability $P_{ab}$ (responsible for $P_d^{Q}$ being lower than our ideal quantum cheating probability $1/\sqrt{2}$) is plotted in solid magenta. Our quantum protocol performs strictly better than any classical protocol when  $P_d^{Q}<P_d^{C}$. We set $\eta_f=\eta_s\eta_t^2$, where $\eta_s$ is the fiber delay transmission corresponding to $500$ns of optical switching time, and $\eta_t^2=\left(10^{-\frac{0.2}{10}d}\right)^2$ is the fiber delay transmission associated with travelling distance $d$ twice (once for quantum, once for classical) in single-mode fibers with attenuation $0.2$ dB/km. We have $\eta_d=0.95$ and $z=0.57$. For performance with lower $\eta_d=0.90$, please see Appendix \ref{sec:eta}.}
		\label{fig:performance}
	\end{center}
\end{figure}

\section{Practical protocol performance} We now analyze the performance of our protocol in a practical setting, by enforcing three conditions on the free parameters: the protocol must be fair, balanced, and perform strictly better than any classical protocol. The latter condition is not required in an ideal implementation, since quantum WCF always provides a security advantage over classical WCF. Allowing for abort cases, however, may enable some classical protocols to perform better than quantum ones. This is because increasing the abort probability effectively decreases Alice and Bob's cheating probabilities. We say that the protocol allows for quantum advantage when it provides a strictly lower cheating probability than any classical protocol with the same abort probability. This is obtained using the bounds from~\cite{HW:TCC11}, which yield the best classical cheating probability $P_d^{C}=1-\sqrt{P_{ab}}$ for our protocol (see Appendix \ref{sec:system}). The three conditions may then be translated into the following system of equations, where we define $P_d^{Q}=P_d^{(A)}=P_d^{(B)}$:

\begin{equation}
   \left\{
    \begin{array}{ll}
     (i) \:\:\:\: &P_h^{(A)}=P_h^{(B)}\:\:\:\:\:\: \text{fairness} \\
     (ii) \:\:\:\: &P_d^{(A)}=P_d^{(B)}\:\:\:\:\: \text{balance} \\
     (iii) &P_d^{Q}< P_d^{C} \:\:\text{quantum advantage} \\
    \end{array}
\right.
\label{eq:systemz}
\end{equation}
%
Fig.~\ref{fig:performance} shows a choice of parameters for which system (\ref{eq:systemz}) is satisfied, up to a distance of $d$ km.


\section{Discussion} By noticing a non-trivial connection between the early protocol from \cite{SR:PRL02} and linear optical transformations, we answer the question of the implementability of quantum weak coin flipping, and show that it is achievable with current technology over a few hundred meters. Both parties require a set of beam splitters and single photon threshold detectors. State generation on Alice's side can be performed with any heralded probabilistic single-photon source. Only Alice requires an optical switch, which is commercially available. Although short-term quantum storage is needed, a spool of optical fiber with twice the length of the quantum channel suffices, and provides the required storage/retrieval efficiency.

As the distance increases, the issue of interferometric stability should also be considered. Prior to the protocol, Alice and Bob may use similar techniques to twin-field quantum key distribution implementations to lock the interference  \cite{MPR:NP19,Pan:PRL2020}, as it is in their interest to collaborate on this task to avoid the protocol from aborting. 

On the fundamental level, our results also raise the question of a potentially deeper connection between the large family of protocols from \cite{M:FOCS04,M:PRA05,M:arx07}---which achieves biases as low as $1/6$---and linear optics. Recalling that the protocol from \cite{SR:PRL02}, and hence our protocol, is conjectured optimal for this family, its extension to many rounds should be necessary in order to lower the bias. The optimality of the one-round protocol is crucial, as a recent result shows that the WCF bias decreases very inefficiently with the number of rounds \cite{M:ACM20}.


\section*{Acknowledgments} We thank Atul Singh Arora and Simon Neves for useful discussions on quantum weak coin flipping and on experimental requirements for heralded single-photon sources, respectively. We acknowledge support of the European Union’s Horizon 2020 Research and Innovation Programme under Grant Agreement No. 820445 (QIA) and the ANR through the ANR-17-CE39-0005 (quBIC) project.

\makeatletter






\onecolumngrid
\bigskip

\bigskip

\bigskip

\begin{center}

\Large{\textbf{Appendix}}

\end{center}

\noindent In this appendix, we give detailed proofs of the results presented in the main text. Section~\ref{sec:preli} contains preliminary technical results. In Section~\ref{sec:secu}, we provide the security analysis for dishonest Alice. In Section~\ref{sec:SCF}, we show how an unbalanced version of our WCF protocol may yield a SCF protocol. In Section~\ref{sec:Phlossy} we derive the completeness of the protocol in the lossy case, and in Section~\ref{sec:seculoss}, we extend the security analysis to the case of a lossy protocol. In Section~\ref{sec:system} we solve the system in Eq.~(\ref{eq:systemz}) of the main text, and derive the constraints that the parameters of the protocol must satisfy in order to obtain a fair and balanced protocol which still outperforms all classical WCF protocols in the lossy case. Finally, in Section~\ref{sec:eta}, we display the practical performance of our fair and balanced protocol for various detection efficiencies.

\begin{appendix}


\section{Preliminary results}
\label{sec:preli}

Single photons are obtained by the action of the creation operator onto the vacuum. Beam splitters act linearly on creation operators, and leave invariant the vacuum. More precisely, a beam splitter of reflectivity $t$ acting on modes $k,l$ maps the creation operators $\hat a_k^\dag,\hat a_l^\dag$ onto $\hat b_k^\dag,\hat b_l^\dag$, where
\be
\begin{pmatrix}\hat b_k^\dag\\\hat b_l^\dag\end{pmatrix}=H_{kl}^{(t)}\begin{pmatrix}\hat a_k^\dag\\\hat a_l^\dag\end{pmatrix},
\ee
where
\be
H_{kl}^{(r)}=\begin{pmatrix}\sqrt r&\sqrt{1-r}\\\sqrt{1-r} & -\sqrt r\end{pmatrix}.
\ee

In the following, we make use of a simple reduction which allows to simplify calculations in the proofs:

\begin{lem} Let $U=(H^{(z)}\otimes \mathbb{1})(\mathbb{1}\otimes H^{(y)})$, with $z>0$. For all density matrices $\tau$,
\begin{equation}
\mathrm{Tr}[(\tau\otimes\ket0\bra0)U^\dag(\mathbb{1}\otimes\ket{00}\bra{00})U]=\mathrm{Tr}[(\tau\otimes\ket0\bra0)V^\dag(\ket{0}\bra{0}\otimes \mathbb{1}\otimes\ket{0}\bra{0})V],
\end{equation}
where $V=(\mathbb{1}\otimes H^{(b)})(H^{(a)}\otimes \mathbb{1})(\mathbb{1}\otimes R(\pi)\otimes \mathbb{1})$, with $a=\frac{y(1-z)}{1-(1-y)(1-z)}$ and $b=1-(1-y)(1-z)$, and $R(\pi)$ a phase shift of $\pi$ acting on mode $2$.
\label{lem:reductionUV}
\end{lem}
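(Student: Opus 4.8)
The plan is to prove the stronger operator identity $U^\dagger(\mathbb{1}\otimes\ket{00}\bra{00})U = V^\dagger(\ket{0}\bra{0}\otimes\mathbb{1}\otimes\ket{0}\bra{0})V$, from which the stated trace equality follows immediately for every $\tau$ (the vacuum ancilla on mode $3$ plays no role in the identity itself). The key structural fact, which I would isolate first, is that for a passive linear-optical unitary $W$ acting on creation operators through a real orthogonal $3\times3$ matrix $\mathcal M_W$ obtained by composing the beam-splitter maps $H^{(t)}$ of the preliminaries, the pullback of a vacuum projector on a set $S$ of output modes, $W^\dagger(\text{vacuum on }S)W$, is exactly the projector onto Fock states whose photons all live in the input supermodes indexed by $\{1,2,3\}\setminus S$. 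This holds because $W^\dagger a_j^\dagger W=\sum_l(\mathcal M_W)_{lj}\,a_l^\dagger$ is the $j$-th column of $\mathcal M_W$, so vacuum on the output modes in $S$ is equivalent to vacuum in the input supermodes given by those columns; orthogonality of $\mathcal M_W$ guarantees these supermodes are mutually orthonormal, so the pullback is a genuine projector onto "all photons in the complementary (bright) supermode(s)."

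Next I would compute the two matrices explicitly. Composing $U=(H^{(z)}\otimes\mathbb{1})(\mathbb{1}\otimes H^{(y)})$ gives $\mathcal M_U$, whose first column is the single bright supermode $(\sqrt z,\sqrt{(1-z)y},\sqrt{(1-z)(1-y)})^{\!T}$: all photons in this mode produce vacuum on output modes $2$ and $3$, so $U^\dagger(\mathbb{1}\otimes\ket{00}\bra{00})U$ is precisely the projector onto "all photons in this mode." Doing the same for $V=(\mathbb{1}\otimes H^{(b)})(H^{(a)}\otimes\mathbb{1})(\mathbb{1}\otimes R(\pi)\otimes\mathbb{1})$ gives $\mathcal M_V$; here the relevant output vacuum is on modes $1$ and $3$, so the bright supermode is the second column of $\mathcal M_V$, namely $(\sqrt{(1-a)b},\sqrt{ab},\sqrt{1-b})^{\!T}$.

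The crux is then a one-line algebraic check that these two bright supermodes coincide as vectors. From $b=1-(1-y)(1-z)$ and $a=y(1-z)/b$ one gets the three identities $ab=y(1-z)$, $b(1-a)=b-ab=z$, and $1-b=(1-y)(1-z)$, which turn the second column of $\mathcal M_V$ into exactly $(\sqrt z,\sqrt{(1-z)y},\sqrt{(1-z)(1-y)})^{\!T}$. Since "all photons in a fixed single mode" defines the same projector regardless of how that mode was produced, the two pullbacks are equal and the lemma follows. I would emphasize that the phase shift $R(\pi)$ on mode $2$ is doing essential work: without it the second component of the $V$ bright supermode carries a minus sign, so the two bright modes would differ and the identity would fail on superposition inputs; $R(\pi)$ is exactly the correction that aligns the signs.

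The main obstacle is purely bookkeeping: fixing one consistent convention for the Heisenberg action (so that output-vacuum projectors pull back to columns rather than rows of $\mathcal M_W$) and tracking signs through the two beam splitters and the phase shift. Once the convention is pinned down, computing $\mathcal M_U$ and $\mathcal M_V$ is routine and the whole statement collapses to the three displayed identities. I would also record the sanity checks that $a,b\in[0,1]$ are legitimate reflectivities (immediate from $b=y+z-yz\ge y(1-z)$) and that the hypothesis $z>0$ keeps the reduction nondegenerate.
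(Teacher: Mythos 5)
Your proposal is correct, and it reaches the paper's conclusion by a genuinely different technical route. The paper proves the same underlying operator identity by expanding $\mathbb{1}\otimes\ket{00}\bra{00}$ (and its counterpart for $V$) over coherent states via the completeness relation $\mathbb{1}=\frac1\pi\int\ket\alpha\bra\alpha\,d^2\alpha$, computing $U^\dag\ket{\alpha00}$ and $V^\dag\ket{0\alpha0}$ as product coherent states, and observing that the resulting amplitude vectors coincide; you instead work directly in Fock space, identifying each pulled-back vacuum projector as the projector onto ``all photons in the bright supermode'' and invoking orthogonality of the mode-transformation matrix to justify that this is a well-defined projector depending only on the bright-mode vector (up to phase). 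Both arguments collapse to the identical algebraic core --- the three identities $b(1-a)=z$, $ab=y(1-z)$, $1-b=(1-y)(1-z)$ applied to the supermode $(\sqrt z,\sqrt{y(1-z)},\sqrt{(1-y)(1-z)})^{T}$ --- so the computational content is the same. The coherent-state route is slightly quicker to verify because product coherent states transform trivially under linear optics and no separate ``bright-mode projector'' lemma is needed; your route is more structural and makes transparent \emph{what} both sides of the identity are (a single projector onto the Fock ladder of one supermode), and your observation about the role of $R(\pi)$ in aligning the sign of the second component --- without which the two bright modes would not be proportional and the identity would fail --- is a worthwhile point that the paper leaves implicit. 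Your bookkeeping caveat about rows versus columns is the only place where care is needed, and you have flagged it appropriately.
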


\begin{proof}

\noindent The action of $U$ on the creation operators is given by
\be
\ba
U&=\begin{pmatrix}\sqrt z&\sqrt{1-z}&0\\\sqrt{1-z}&-\sqrt z&0\\0&0&1\end{pmatrix}\begin{pmatrix}1&0&0\\0&\sqrt y&\sqrt{1-y}\\0&\sqrt{1-y}&-\sqrt y\end{pmatrix}\\
&=\begin{pmatrix}\sqrt z&\sqrt{y(1-z)}&\sqrt{(1-y)(1-z)}\\\sqrt{1-z}&-\sqrt{yz}&-\sqrt{(1-y)z}\\0&\sqrt{1-y}&-\sqrt y\end{pmatrix}.
\ea
\ee
Linear interferometers map product coherent states onto product coherent states, and, for all $\alpha\in\mathbb C$, we have that $U^\dag\ket{\alpha00}=\ket{\beta_1\beta_2\beta_3}$, where
\be
\begin{pmatrix}\beta_1\\\beta_2\\\beta_3\end{pmatrix}=\begin{pmatrix}\alpha\sqrt z\\\alpha\sqrt{y(1-z)}\\\alpha\sqrt{(1-y)(1-z)}\end{pmatrix}.
\ee
We have $V=(\mathbb{1}\otimes H^{(b)})(H^{(a)}\otimes \mathbb{1})(\mathbb{1}\otimes R(\pi)\otimes \mathbb{1})$, with $a,b\in[0,1]$, and $R(\pi)$ a phase shift of $\pi$ acting on mode $2$. The action of $V$ on the creation operators is given by
\be
\ba
V&=\begin{pmatrix}1&0&0\\0&\sqrt b&\sqrt{1-b}\\0&\sqrt{1-b}&-\sqrt b\end{pmatrix}\begin{pmatrix}\sqrt a&\sqrt{1-a}&0\\\sqrt{1-a}&-\sqrt a&0\\0&0&1\end{pmatrix}\begin{pmatrix}1&0&0\\0&-1&0\\0&0&1\end{pmatrix}\\
&=\begin{pmatrix}\sqrt a&-\sqrt{1-a}&0\\\sqrt{b(1-a)}&\sqrt{ab}&\sqrt{1-b}\\\sqrt{(1-a)(1-b)}&\sqrt{a(1-b)}&-\sqrt b\end{pmatrix}.
\ea
\ee
For all $\alpha\in\mathbb C$, $V^\dag\ket{0\alpha0}=\ket{\gamma_1\gamma_2\gamma_3}$, where
\be
\begin{pmatrix}\gamma_1\\\gamma_2\\\gamma_3\end{pmatrix}=\begin{pmatrix}\alpha\sqrt{b(1-a)}\\\alpha\sqrt{ab}\\\alpha\sqrt{1-b}\end{pmatrix}.
\ee
Since $a=\frac{y(1-z)}{1-(1-y)(1-z)}$ and $b=1-(1-y)(1-z)$, we have $b(1-a)=z$, $ab=y(1-z)$, and $1-b=(1-y)(1-z)$, so $(\beta_1,\beta_2,\beta_3)=(\gamma_1,\gamma_2,\gamma_3)$.
Then,
\be
\ba
\mathrm{Tr}[(\tau\otimes\ket0\bra0)U^\dag(\mathbb{1}\otimes\ket{00}\bra{00})U]&=\frac1\pi\int_{\mathbb C}{d^2\alpha\mathrm{Tr}[(\tau\otimes\ket0\bra0)U^\dag\ket{\alpha00}\bra{\alpha00}U]}\\
&=\frac1\pi\int_{\mathbb C}{d^2\alpha\mathrm{Tr}[(\tau\otimes\ket0\bra0)V^\dag\ket{0\alpha0}\bra{0\alpha0}V]}\\
&=\mathrm{Tr}[(\tau\otimes\ket0\bra0)V^\dag(\ket{0}\bra{0}\otimes \mathbb{1}\otimes\ket{0}\bra{0})V],
\ea
\ee
where we used the completeness relation of coherent states $\mathbb{1}=\frac1\pi\int_{\mathbb C}{\ket\alpha\bra\alpha d^2\alpha}$.

\end{proof}

We also recall a useful simple property, which we will use extensively in the following:

\begin{lem} Equal losses on various modes can be commuted through passive linear optical elements acting on these modes.
\label{lem:commut}
\end{lem}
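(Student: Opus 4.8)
The plan is to use the standard physical model of loss: a loss of transmissivity $\eta$ on a mode is a beam splitter $H^{(\eta)}$ coupling that mode to an ancillary environment mode prepared in the vacuum, with the environment subsequently traced out. \emph{Equal} losses on several modes means that every mode $k$ in the relevant set is coupled, with the same transmissivity $\eta$, to its own fresh environment mode $e_k$ in vacuum. Writing the passive linear element as a unitary $U_L$ acting on the vector of system creation operators $(\hat a_k^\dag)$ and leaving the environment modes untouched, I would compare the two orderings of operations directly in the Heisenberg picture on creation operators, consistently with the formalism of Section~\ref{sec:preli}.

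First, I would record the action of each elementary operation on creation operators: the equal-loss map sends $\hat a_k^\dag \mapsto \sqrt\eta\,\hat a_k^\dag + \sqrt{1-\eta}\,\hat e_k^\dag$ uniformly in $k$, while the passive element sends $\hat a_k^\dag \mapsto \sum_j (U_L)_{kj}\,\hat a_j^\dag$ and acts trivially on the $\hat e_k^\dag$. Composing, the ordering ``loss then $U_L$'' produces $\hat a_k^\dag \mapsto \sqrt\eta\sum_j (U_L)_{kj}\hat a_j^\dag + \sqrt{1-\eta}\sum_j (U_L)_{kj}\hat e_j^\dag$, whereas ``$U_L$ then loss'' produces $\hat a_k^\dag \mapsto \sqrt\eta\sum_j (U_L)_{kj}\hat a_j^\dag + \sqrt{1-\eta}\,\hat e_k^\dag$. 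The system parts coincide --- precisely because equal losses contribute the scalar $\sqrt\eta\,\mathbb{1}$, which commutes with $U_L$ --- so the two orderings differ only in whether the transformation $U_L$ is applied to the environment operators as well.

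To close this residual gap I would invoke vacuum invariance: since all the environment modes start in the vacuum, and the vacuum is left invariant by any passive linear transformation, I may append for free the passive transformation $U_L$ (or its inverse) on the environment modes before tracing them out, without altering the reduced channel on the system. This extra transformation maps the ``$U_L$ then loss'' dilation exactly onto the ``loss then $U_L$'' dilation, so that after discarding the environment the two orderings implement the same channel on the system modes, which is the claim.

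The point requiring care --- and the reason the hypothesis of \emph{equal} losses is essential --- is the commutation of the deterministic attenuation with $U_L$. Had the transmissivities differed across modes, the attenuation would be the diagonal matrix $\mathrm{diag}(\sqrt{\eta_k})$ rather than a scalar multiple of the identity, which in general does not commute with $U_L$, and the argument would break down. Thus the crux is twofold: equality of the losses reduces the deterministic part to a commuting scalar, and the vacuum environment absorbs the leftover passive transformation on the ancillas.
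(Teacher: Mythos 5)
Your proof is correct, and it takes a genuinely different (and in fact more careful) route than the paper's. The paper first invokes the decomposition of an arbitrary interferometer into beam splitters and phase shifters \cite{reck1994experimental}, and then verifies commutation for a single beam splitter by modelling equal losses simply as the attenuation $\hat a_k^\dag\mapsto\sqrt\eta\,\hat a_k^\dag$ of the creation operators --- the environment is never made explicit, so the non-unitarity of the loss channel is handled only implicitly. You instead work directly with a general passive unitary $U_L$ (no Reck decomposition needed) and use the full Stinespring-type dilation of loss as a beam splitter coupling each mode to a fresh vacuum ancilla. This forces you to confront the residual discrepancy between the two orderings --- one dilation carries an extra $U_L$ on the environment creation operators --- and you close it correctly by noting that a passive transformation preserves the vacuum and is traced out, so it does not affect the reduced channel. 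Your version buys rigor (the statement is really about equality of channels, not of Heisenberg maps on a dense set of vectors) and generality (no appeal to the decomposition theorem), at the cost of one extra step; the paper's version is shorter but leans on the informal $\sqrt\eta$-attenuation picture of loss. Your closing remark that equality of the losses is exactly what makes the deterministic part a scalar multiple of the identity, hence commuting with $U_L$, correctly identifies the crux, and matches the implicit reason the paper's calculation goes through.
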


 This result was proven, e.g., in~\cite{berry2010linear}, and we give hereafter a quick proof for completeness.

\begin{proof}

One way to prove this statement is to use the fact that any interferometer may be decomposed as beam splitters and phase shifters~\cite{reck1994experimental}. Then, losses trivially commute with phase shifters, and are easily shown to commute with beam splitters. Indeed, consider a beam splitter of reflectivity $t$ acting on modes $1$ and $2$. Its action on the creation operators of the modes is given by
\be
\hat a_1^\dag,\hat a_2^\dag\rightarrow\sqrt t\hat a_1^\dag+\sqrt{1-t}\hat a_2^\dag,\sqrt{1-t}\hat a_1^\dag-\sqrt t\hat a_2^\dag,
\ee
while equal losses $\eta$ on both modes act as
\be
\hat a_1^\dag,\hat a_2^\dag\rightarrow\sqrt\eta \hat a_1^\dag,\sqrt\eta \hat a_2^\dag.
\ee
Hence, the action of the beam splitter followed by losses is given by
\be
\hat a_1^\dag,\hat a_2^\dag\rightarrow\sqrt\eta(\sqrt t\hat a_1^\dag+\sqrt{1-t}\hat a_2^\dag),\sqrt\eta(\sqrt{1-t}\hat a_1^\dag-\sqrt t\hat a_2^\dag),
\ee
while losses followed by the beam splitter act as
\be
\hat a_1^\dag,\hat a_2^\dag\rightarrow\sqrt t(\sqrt\eta \hat a_1^\dag)+\sqrt{1-t}(\sqrt\eta \hat a_2^\dag),\sqrt{1-t}(\sqrt\eta \hat a_1^\dag)-\sqrt t(\sqrt\eta \hat a_2^\dag),
\ee
which is equal to the previous evolution.

\end{proof}


In what follows, we let the parameters $x,y,z$ vary freely, and derive the relation these parameters need to satisfy to enforce a honest protocol without abort cases.
As presented in the main text, when both parties are honest (Fig.~\ref{fig:protocol}), the evolution of the quantum state over the three modes up to Bob's first measurement reads:
\be
\ba
\ket{100}&\underset{(x),12}\to\sqrt{x}\ket{100}+\sqrt{1-x}\ket{010}\\
&\underset{(y),23}\to\sqrt{x}\ket{100}+\sqrt{(1-x)y}\ket{010}+\sqrt{(1-x)(1-y)}\ket{001},
\ea\label{eq:evolution}
\ee
where the notation $(t),kl$ indicates the reflectivity of the beam splitter and the corresponding spatial modes.
Hence, the probability that Bob obtains outcome $c=1$ when measuring the third register is $P(1)=(1-x)(1-y)$, while the probability of outcome $c=0$ is $P(0)=1-P(1)$.

\medskip

If Bob registers the outcome $c=1$, then the post-measurement state on Alice's side is $\ket0$, which will always pass the verification step. 

\medskip

If Bob registers the outcome $c=0$, then the post-measurement state reads:
\be
\sqrt{\frac{x}{1-(1-x)(1-y)}}\ket{10}+\sqrt{\frac{(1-x)y}{1-(1-x)(1-y)}}\ket{01}.
\ee
The value of the parameter $z$ should be fixed to
\be
z=\frac{x}{1-(1-x)(1-y)},
\label{eq:z}
\ee
so that this state passes the verification step, and that the protocol doesn't abort in the honest case. We assume this relation holds in the following. In that case, the winning probabilities of Alice and Bob in the honest case are given by
\be
\ba
P_h^{(A)}&=1-(1-x)(1-y)\\
P_h^{(B)}&=(1-x)(1-y).
\ea
\label{ph1}
\ee
The protocol is fair when $(1-x)(1-y)=\frac12$. In that case, $y=1-\frac1{2(1-x)}$ and $z=2x$. 

Let us also recall from the main text that in the general case, the winning probability of dishonest Bob is given by
\be
P_d^{(B)}=1-x.
\label{pdb1}
\ee

\section{Security analysis for Dishonest Alice without losses}
\label{sec:secu}

\subsection{Bob has number-resolving detectors}

When using number-resolving single-photon detectors, any projection onto the $n>1$ photon subspace leads to Alice getting caught cheating. Alice must therefore maximize the overlap with the projective measurement $\ket{100}\bra{100}$ only (Fig.~\ref{fig:Bhonest}).
\begin{figure}[h!]
	\begin{center}
		\includegraphics[width=0.6\columnwidth]{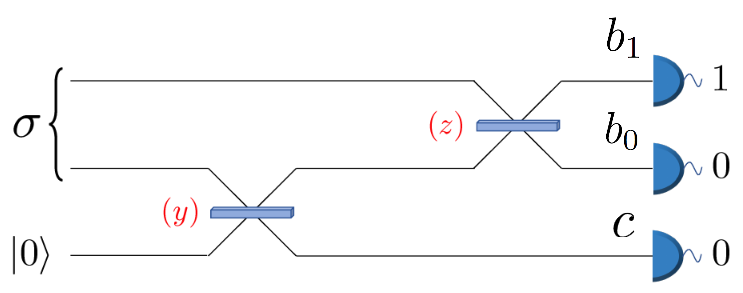}
		\caption{\textbf{Dishonest Alice.} Alice aims to maximize the outcome $(1,0,0)$: an outcome $0$ on the third mode means that Bob declared Alice the winner, while an outcome $(1,0)$ for modes $1$ and $2$ means that Alice passed Bob's verification. The reflectivities of the beamsplitter are given by $y=1-\frac1{2(1-x)}$ and $z=2x$.}
		\label{fig:Bhonest}
	\end{center}
\end{figure}
Let $\sigma$ be the state sent by Alice. Let $U=(H^{(z)}\otimes \mathbb{1})(\mathbb{1}\otimes H^{(y)})$, with $z=\frac{x}{1-(1-x)(1-y)}$.
Alice needs to maximize the probability of the overall outcome $(1,0,0)$, which is given by
\begin{equation}
    P_d^{(A)}=\mathrm{Tr}[U(\sigma\otimes\ket0\bra0)U^\dag\ket{100}\bra{100}],
\end{equation}
since Bob uses number-resolving detectors. By convexity of the probabilities, we may assume without loss of generality that Alice sends a pure state $\sigma=\ket\psi\bra\psi$, which allows us to write:
\be
\ba
P_d^{(A)}&=\mathrm{Tr}[U(\ket\psi\bra\psi\otimes\ket0\bra0)U^\dag\ket{100}\bra{100}]\\
&=\mathrm{Tr}[(\ket\psi\bra\psi\otimes\ket0\bra0)U^\dag\ket{100}\bra{100}U]\\
&=\mathrm{Tr}[\bra{\psi}\otimes\bra{0}U^\dag\ket{100}\bra{100}U\ket{\psi}\otimes\ket{0}].
\ea\label{pwin}
\ee
We have:
\be
\ba
U^\dag\ket{100}&=(\mathbb{1}\otimes H^{(y)})(H^{(z)}\otimes \mathbb{1})\ket{100}\\
&=(\mathbb{1}\otimes H^{(y)})(\sqrt{z}\ket{100}+\sqrt{1-z}\ket{010})\\
&=\sqrt{z}\ket{100}+\sqrt{y(1-z)}\ket{010}+\sqrt{(1-y)(1-z)}\ket{001},
\ea
\ee
and therefore:
\be
\ba
U^\dag\ket{100}\bra{100}U=&z\ket{100}\bra{100}+y(1-z)\ket{010}\bra{010}+(1-y)(1-z)\ket{001}\bra{001}\\
&+\sqrt{yz(1-z)}\left(\ket{100}\bra{010}+\ket{010}\bra{100}\right)\\
&+\sqrt{z(1-y)(1-z)}\left(\ket{100}\bra{001}+\ket{001}\bra{100}\right)\\
&+(1-z)\sqrt{y(1-y)}\left(\ket{010}\bra{001}+\ket{001}\bra{010}\right).
\ea
\ee
Substituting back into Eq. (\ref{pwin}) then reduces to:

\be
\ba
P_d^{(A)}&=\bra{\psi}\left(z\ket{10}\bra{10}+y(1-z)\ket{01}\bra{01}+\sqrt{yz(1-z)}(\ket{10}\bra{01}+\ket{01}\bra{10})\right)\ket{\psi}\\
&=\bra{\psi}\left(\sqrt{z}\ket{10}+\sqrt{y(1-z)}\ket{01}\right)\left(\sqrt{z}\bra{10}+\sqrt{y(1-z)}\bra{01}\right)\ket{\psi}\\
&=\abs{\bra{\psi}\left(\sqrt{z}\ket{10}+\sqrt{y(1-z)}\ket{01}\right)}^2.
\ea
\ee
Using Cauchy-Schwarz inequality then allows to upper bound $P_d^{(A)}$ as:
\begin{equation}
\begin{aligned}
    P_d^{(A)} &\leqslant \|\psi\|^2\left\|\left(\sqrt{z}\ket{10}+\sqrt{y(1-z)}\ket{01}\right)\right\|^2 \leqslant(1-(1-y)(1-z))\|\psi\|^2,
\end{aligned}
\label{cauchy}
\end{equation}
which is maximized for $\|\psi\|=1$. Hence we finally get:

\begin{equation}
P_d^{(A)} \leqslant1-(1-y)(1-z).
\end{equation}

In order to find Alice's optimal cheating strategy (i.e.\@ the optimal pure state $\ket{\phi}$ that she must send to achieve this bound), we remark that the unnormalized state $\sqrt{z}\ket{10}+\sqrt{y(1-z)}\ket{01}$ maximizes the expression in Eq. (\ref{cauchy}). Normalizing this state then provides Alice's optimal strategy, which is to prepare the state
\begin{equation}
\ket{\phi}:=\sqrt{\frac z{1-(1-y)(1-z)}}\ket{10}+\sqrt{\frac{y(1-z)}{1-(1-y)(1-z)}}\ket{01}.
\label{eq:phi}
\end{equation}
Hence,
\begin{equation}
P_d^{(A)}=1-(1-y)(1-z).
\end{equation}
In the case of a fair protocol, $y=1-\frac1{2(1-x)}$ and $z=2x$, so
\begin{equation}
P_d^{(A)}=\frac1{2(1-x)},
\end{equation}
and Alice's optimal strategy is to prepare the state
\begin{equation}
\ket{\phi_x}:=2\sqrt{x(1-x)}\ket{10}+(1-2x)\ket{01}.
\end{equation}

\subsection{Bob has threshold detectors}

Unlike the previous case, incorrect outcomes with higher photon number could still pass the test: for $n\ge1$, the threshold detectors cannot discriminate between a $\ket{100}$ and $\ket{n00}$ projection. We show in the following that this doesn't help a dishonest Alice, and that the strategy described previously for the case of number resolving detectors is still optimal in the case of threshold detectors.

With the same notations as in the previous proof, Alice needs to maximize the probability of the overall outcome $(1,0,0)$, hence the overlap with the projector $\sum_{n=1}^{\infty}\ket{n00}\bra{n00}=\left(\mathbb{1}-\ket{0}\bra{0}\right)\otimes\ket{00}\bra{00}$. This allows us to write:
\begin{equation}
    P_d^{(A)}=\mathrm{Tr}[U(\ket\psi\bra\psi\otimes\ket0\bra0)U^\dag((\mathbb{1}-\ket0\bra0)\otimes\ket{00}\bra{00})],
    \label{eq:PwinAT}
\end{equation}
since Bob uses threshold detectors, where $U=(H^{(z)}\otimes \mathbb{1})(\mathbb{1}\otimes H^{(y)})$, with $z=\frac{x}{1-(1-x)(1-y)}$.

Linear optical evolution conserves photon number. Hence if Alice sends the vacuum state, the detectors will never click. Removing the two-mode vacuum component of the state prepared by Alice and renormalizing therefore always increases her winning probability. Since we are looking for the maximum winning probability, we can assume without loss of generality that $\braket{\psi|00}=0$, i.e.
\be
\mathrm{Tr}[U(\ket\psi\bra\psi\otimes\ket0\bra0)U^\dag\ket{000}\bra{000}]=|\braket{\psi|00}|^2,
\ee
So maximizing the winning probability in Eq.~(\ref{eq:PwinAT}) is equivalent to maximizing
\be
\tilde P_d^{(A)}=\mathrm{Tr}[U(\ket\psi\bra\psi\otimes\ket0\bra0)U^\dag(\mathbb{1}\otimes\ket{00}\bra{00})],
\ee
given the constraint $\braket{\psi|00}=0$. We have
\be
\ba
\tilde P_d^{(A)}&=\mathrm{Tr}[U(\ket\psi\bra\psi\otimes\ket0\bra0)U^\dag(\mathbb{1}\otimes\ket{00}\bra{00})]\\
&=\mathrm{Tr}[(\ket\psi\bra\psi\otimes\ket0\bra0)U^\dag(\mathbb{1}\otimes\ket{00}\bra{00})U].
\label{eq:completeness}
\ea
\ee
With Lemma~\ref{lem:reductionUV} and Eq.~(\ref{eq:completeness}), we may thus write:
\be
\tilde P_d^{(A)}=\mathrm{Tr}[(\ket\psi\bra\psi\otimes\ket0\bra0)V^\dag(\ket{0}\bra{0}\otimes \mathbb{1}\otimes\ket{0}\bra{0})V],
\label{eq:PwinAtilde}
\ee
where $V=(\mathbb{1}\otimes H^{(b)})(H^{(a)}\otimes \mathbb{1})(\mathbb{1}\otimes R(\pi)\otimes \mathbb{1})$, with $a=\frac{y(1-z)}{1-(1-y)(1-z)}$ and $b=1-(1-y)(1-z)$. Let us now define:
\be
\ket{\psi_a}:=H^{(a)}(\mathbb{1}\otimes R(\pi))\ket\psi.
\label{eq:psix}
\ee
The constraints $\braket{\psi|00}=0$ and $\braket{\psi_a|00}=0$ are equivalent, because the above transformation leaves the total number of photons invariant. With Eq.~(\ref{eq:PwinAtilde}) we obtain
\be
\tilde P_d^{(A)}=\mathrm{Tr}[(\ket{\psi_a}\bra{\psi_a}\otimes\ket0\bra0)(\mathbb{1}\otimes H^{(b)})(\ket{0}\bra{0}\otimes \mathbb{1}\otimes\ket{0}\bra{0})(\mathbb{1}\otimes H^{(b)})],
\label{eq:PwinAtilde2}
\ee
with the constraint $\braket{\psi_a|00}=0$.
\begin{figure}
	\begin{center}
		\includegraphics[width=0.5\columnwidth]{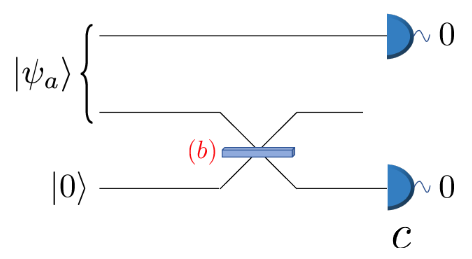}
		\caption{\textbf{Equivalent picture for dishonest Alice.} In the original dishonest setup of Fig.~\ref{fig:Bhonest}, Alice aims to maximize the outcome $(1,0,0)$. This is equivalent to Alice maximizing outcome $0$ on spatial modes $1$ and $3$, independently of what is detected on mode $2$. The outcomes indicated correspond to Alice winning. The reflectivity is $b=1-(1-y)(1-z)$.}
		\label{fig:equivproof}
	\end{center}
\end{figure}
Maximizing this expression thus corresponds to maximizing the probability of the outcome $(0,0)$ when measuring modes $1$ and $3$ of the state obtain by mixing the second half of $\ket{\psi_a}$ with the vacuum on a beam splitter of reflectivity $b=1-(1-y)(1-z)$ (Fig.~\ref{fig:equivproof}).

We now show that an optimal strategy for Alice is to ensure that $\ket{\psi_a}=\ket{01}$. Let us write
\be
\ket{\psi_a}=\sum_{p+q>0}{\psi_{pq}\ket{pq}},
\ee
where we take into account the constraint $\braket{\psi_x|00}=0$. Then, with Eq.~(\ref{eq:PwinAtilde2}) we obtain
\be
\ba
\tilde P_d^{(A)}&=\sum_{p+q>0,p'+q'>0}{\psi_{pq}\psi_{p'q'}^*\mathrm{Tr}[\ket{pq0}\bra{p'q'0}(\ket{0}\bra{0}\otimes H^{(b)}(\mathbb{1}\otimes\ket{0}\bra{0})H^{(b)})]}\\
&=\sum_{q>0,q'>0}{\psi_{0q}\psi_{0q'}^*\mathrm{Tr}[\ket{q0}\bra{q'0} H^{(b)}(\mathbb{1}\otimes\ket{0}\bra{0})H^{(b)}]}\\
&=\sum_{n\ge0,q>0,q'>0}{\psi_{0q}\psi_{0q'}^*\mathrm{Tr}[\ket{q0}\bra{q'0} H^{(b)}\ket{n0}\bra{n0}H^{(b)}]}\\
&=\sum_{n>0}{|\psi_{0n}|^2|\braket{n0|H^{(b)}|n0}|^2}\\
&=\sum_{n>0}{|\psi_{0n}|^2b^n},
\ea
\ee
where we used in the fourth line the fact that $H^{(b)}$ doesn't change the number of photons. Since $b\in[0,1]$, this shows that
\be
\ba
\tilde P_d^{(A)}&\leqslant b\sum_{n>0}{|\psi_{0n}|^2}\\
&=b,
\ea
\ee
since $\ket{\psi_a}$ is normalized, and this bound is reached for $|\psi_{01}|^2=1$, i.e.\@ $\ket{\psi_a}=\ket{01}$. With Eq.~(\ref{eq:psix}), this implies that an optimal strategy for Alice is to prepare the state
\begin{equation}
\begin{aligned}
\ket\psi&=(\mathbb{1}\otimes R(\pi))H^{(a)}\ket{01}\\
&=\sqrt{1-a}\ket{10}+\sqrt{a}\ket{01}\\
&=\sqrt{\frac z{1-(1-y)(1-z)}}\ket{10}+\sqrt{\frac{y(1-z)}{1-(1-y)(1-z)}}\ket{01}\\
&=\ket{\phi},
\end{aligned}
\end{equation}
where $\ket\phi$ is the state that dishonest Alice needs to send to maximize her winning probability when Bob uses number-resolving detectors (Eq. (\ref{eq:phi})). Her winning probability is then
\be
P_d^{(A)}=1-(1-y)(1-z). 
\label{pda1}
\ee
We therefore recover the same result as for number-resolving detectors. Once again, if the protocol is fair then $y=1-\frac1{2(1-x)}$ and $z=2x$, so
\be
P_d^{(A)}=\frac1{2(1-x)},
\ee
and an optimal strategy for Alice is to prepare the state
\be
\ket{\phi_x}:=2\sqrt{x(1-x)}\ket{10}+(1-2x)\ket{01}.
\ee
%


\section{Quantum SCF protocol}
\label{sec:SCF}

An unbalanced quantum WCF protocol can be turned into a quantum SCF protocol using an additional classical protocol, as described in~\cite{IEEE:CK09}. In particular, let us consider a WCF protocol such that:
\be
\ba
P_h^{(A)}&=p\\
P_h^{(B)}&=1-p\\
P_d^{(A)}&=p+\epsilon\\
P_d^{(B)}&=1-p+\epsilon,
\ea
\label{pepsSCF}
\ee
for $p\in[0,1]$ and $\epsilon>0$. Then, the corresponding SCF protocol has bias~\cite{IEEE:CK09}
\be
\max{\left(\frac12-\frac12(p-\epsilon),\frac1{2-(p+\epsilon)}-\frac12\right)}.
\ee
For our WCF protocol, we have Eqs.~(\ref{ph1}),(\ref{pdb1}) and (\ref{pda1}):
\be
\ba
P_h^{(A)}&=1-(1-x)(1-y)\\
P_h^{(B)}&=(1-x)(1-y)\\
P_d^{(A)}&=1-(1-y)(1-z)\\
P_d^{(B)}&=1-x,
\ea
\ee
with the constraint $z=\frac x{1-(1-x)(1-y)}$ (so that the protocol does not abort in the honest case, Eq.~(\ref{eq:z})).
Enforcing the conditions in Eq.~(\ref{pepsSCF}), and optimizing over the corresponding SCF bias implies
\be
\ba
x&=\frac{y^2}{(1-y)(1-2y)}\\
z&=\frac y{(1-y)^2}\\
1-\frac x2&=\frac1{2-y-z+yz},
\ea
\ee
which in turn give the values
\be
\ba
x&\approx0.38\\
y&\approx0.31\\
z&\approx0.66,
\ea
\ee
by enforcing $x,y,z\in[0,1]$, and a bias of $\approx0.31$, which is a lower bias than the best implemented SCF protocol so far \cite{PJL:NC14}.


\section{Correctness, with losses}
\label{sec:Phlossy}

We give a representation of the honest protocol with losses, in Fig.~\ref{fig:protocolapp}. The efficiency of Alice's and Bob's detectors are denoted $\eta_d^{(A)}$ and $\eta_d^{(B)}$, respectively. The efficiency of the quantum channel from Alice to Bob is denoted $\eta_t$, and $\eta_f^{(A)}$ and $\eta_f^{(B)}$ are the efficiencies of Alice's and Bob's fiber delay lines, respectively.

The honest winning probability for Bob is directly given by his chance of detecting the photon (the photon gets to his detector and doesn't get lost):
\be
P_h^{(B)}=\eta_t\eta_d^{(B)}(1-x)(1-y).
\label{eq:phb}
\ee
On the other hand, Alice wins if the photon, starting from her first input mode, is detected by Bob in the last step. 
\begin{figure}[h]
\begin{center}
\includegraphics[width=1\columnwidth]{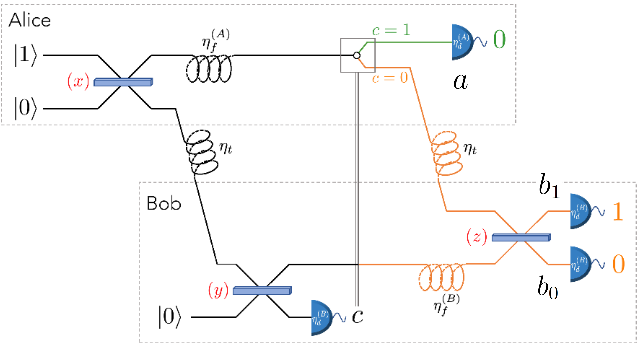}
\caption{\textbf{Representation of the honest protocol with losses.} The dashed boxes indicate Alice and Bob's laboratories, respectively. The reflectivity of the beamsplitters is indicated in red. The efficiencies of the detectors, are indicated in white. Curly lines represent fiber used for quantum communication from Alice to Bob, or delay lines within Alice's or Bob's laboratory. $\ket{0}$ and $\ket{1}$ are the vacuum and single photon Fock states, respectively. Bob broadcasts the classical outcome $c$, which controls an optical switch on Alice's side. The protocol when Bob declares $c=0/1$ is represented in orange/green. The final outcomes are the expected outcomes when both parties are honest.}
\label{fig:protocolapp}
\end{center}
\end{figure}

The evolution of the creation operator of the first mode during the lossy honest protocol is given by:
\be
\ba
\hat a_1^\dag&\rightarrow\sqrt x\hat a_1^\dag+\sqrt{1-x}\hat a_2^\dag\\
&\rightarrow\sqrt{x\eta_f^{(A)}}\hat a_1^\dag+\sqrt{(1-x)\eta_t}\hat a_2^\dag\\
&\rightarrow\sqrt{x\eta_f^{(A)}}\hat a_1^\dag+\sqrt{(1-x)\eta_ty}\hat a_2^\dag+\sqrt{(1-x)(1-y)\eta_t}\hat a_3^\dag\\
&\rightarrow\sqrt{x\eta_f^{(A)}}\hat a_1^\dag+\sqrt{(1-x)\eta_ty}\hat a_2^\dag+\sqrt{(1-x)(1-y)\eta_t\eta_d^{(B)}}\hat a_3^\dag\\
&\rightarrow\sqrt{x\eta_f^{(A)}\eta_t}\hat a_1^\dag+\sqrt{(1-x)\eta_ty\eta_f^{(B)}}\hat a_2^\dag+\sqrt{(1-x)(1-y)\eta_t\eta_d^{(B)}}\hat a_3^\dag\\
&\rightarrow\left(\sqrt{x\eta_f^{(A)}\eta_tz}+\sqrt{(1-x)\eta_ty\eta_f^{(B)}(1-z)}\right)\hat a_1^\dag+\left(\sqrt{x\eta_f^{(A)}\eta_t(1-z)}-\sqrt{(1-x)\eta_ty\eta_f^{(B)}z}\right)\hat a_2^\dag\\
&\qquad+\sqrt{(1-x)(1-y)\eta_t\eta_d^{(B)}}\hat a_3^\dag\\
&\rightarrow\left(\sqrt{x\eta_f^{(A)}\eta_tz\eta_d^{(B)}}+\sqrt{(1-x)\eta_ty\eta_f^{(B)}(1-z)\eta_d^{(B)}}\right)\hat a_1^\dag+\left(\sqrt{x\eta_f^{(A)}\eta_t(1-z)\eta_d^{(B)}}-\sqrt{(1-x)\eta_ty\eta_f^{(B)}z\eta_d^{(B)}}\right)\hat a_2^\dag\\
&\qquad+\sqrt{(1-x)(1-y)\eta_t\eta_d^{(B)}}\hat a_3^\dag.
\ea
\ee
In particular, the photon reaches Bob's uppermost detector with probability
\be
\ba
P_h^{(A)}&=\left(\sqrt{x\eta_f^{(A)}\eta_tz\eta_d^{(B)}}+\sqrt{(1-x)\eta_ty\eta_f^{(B)}(1-z)\eta_d^{(B)}}\right)^2\\
&=\eta_t\eta_d^{(B)}\left(\sqrt{xz\eta_f^{(A)}}+\sqrt{(1-x)y(1-z)\eta_f^{(B)}}\right)^2.
\ea
\label{eq:pha}
\ee
Finally, the protocol aborts for all other detection events:
\be
P_{ab}=1-P_h^{(A)}-P_h^{(B)}.
\ee
%


\section{Security analysis for Dishonest Alice, with losses}
\label{sec:seculoss}

The losses $\eta$ correspond to a probability $1-\eta$ of losing a photon. These can be modelled as a mixing with the vacuum on a beam splitter of reflectivity $\eta$. Dishonest Bob wins with probability
\be
P_d^{(B)}=1-x\eta_f^{(A)}\eta_d^{(A)},
\label{eq:pdb}
\ee
by performing the same attack as in the lossless case, since he has no control over Alice's laboratory. In what follows, we provide the security analysis for Dishonest Alice.

\subsection{Lossy delay line}

We show in this section that Alice’s maximum winning probability when Bob is using a delay line of efficiency $\eta_f$ is always lower than when Bob’s delay line is perfect, i.e.\@ $\eta_f=1$, independently of the efficiency $\eta_d$ of his detectors. The lossy delay line of efficiency $\eta_f$ may be modelled as a mixing with the vacuum on a beam splitter of transmission $\eta_f$.

Alice prepares a state $\sigma$, which goes through the interferometer depicted in Fig.~\ref{fig:lossy1}, and wins if the measurement outcome obtained by Bob is $(1,0,0)$.
In particular, note that the outcome $0$ must be obtained for the third mode. Hence Alice's winning probability is always lower than if the third mode was mixed with the vacuum on a beam splitter of transmission amplitude $\eta_f$ just before the detection (Fig.~\ref{fig:lossy2}), since this increases the probability of the outcome $0$ for this mode.
Let us assume that this is the case. Then, by Lemma~\ref{lem:commut}, the losses $\eta_f$ on output modes $2$ and $3$ may be commuted back through the beam splitter of reflectivity $y$, acting on modes $2$ and $3$. 

\begin{figure}[h!]
	\begin{center}
		\includegraphics[width=4.3in]{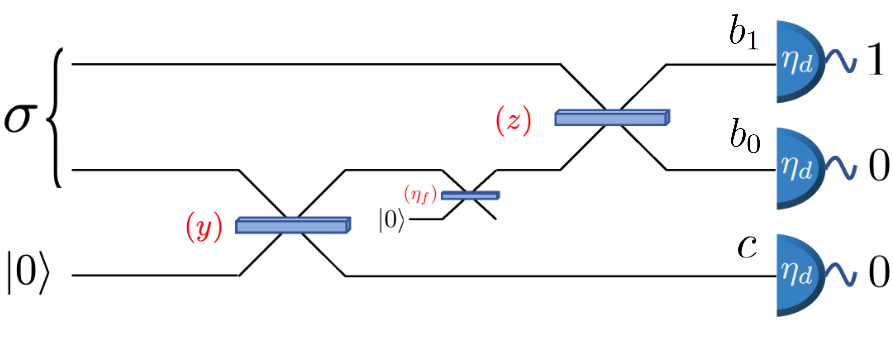}
		\caption{Alice aims to maximize the outcome $(1,0,0)$ by sending the state $\sigma$. The lossy delay line is represented by a mixing with the vacuum on a  beam splitter of transmission amplitude $\eta_f$. The quantum efficiency of the detectors is indicated in white.}
		\label{fig:lossy1}
	\end{center}
\end{figure}
\begin{figure}[h!]
	\begin{center}
		\includegraphics[width=4.3in]{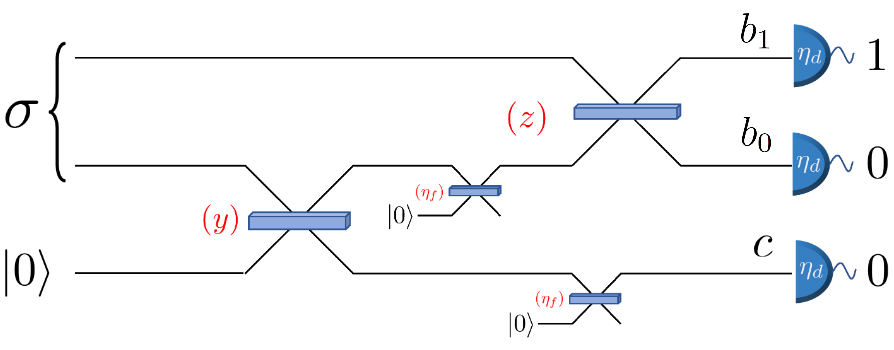}
		\caption{Adding losses on the third mode increases Alice's winning probability.}
		\label{fig:lossy2}
	\end{center}
\end{figure}
\begin{figure}[h!]
	\begin{center}
		\includegraphics[width=4.3in]{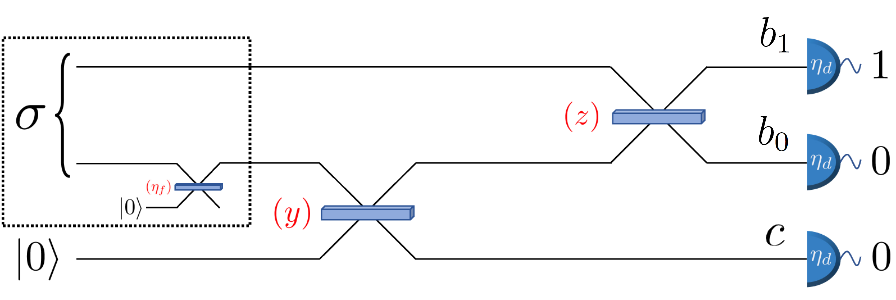}
		\caption{The losses $\eta_f$ are commuted back to Alice's state preparation. The losses on input mode $3$ can be omitted since the input state is the vacuum.}
		\label{fig:lossy3}
	\end{center}
\end{figure}
\begin{figure}[h!]
	\begin{center}
		\includegraphics[width=4.3in]{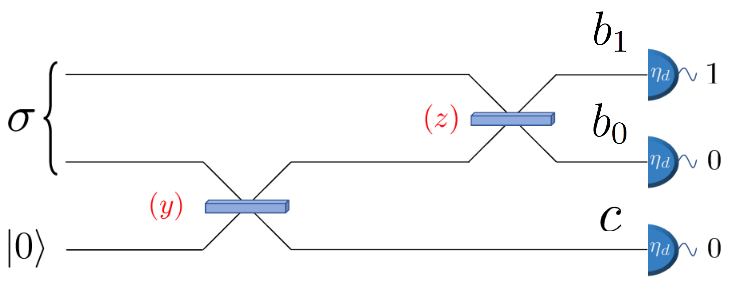}
		\caption{Alice aims to maximize the outcome $(1,0,0)$ by sending the state $\sigma$. The delay line efficiency $\eta_f$ is equal to $1$.}
		\label{fig:lossy4}
	\end{center}
\end{figure}

Since the input state on mode $3$ is the vacuum, the losses on this mode may then be removed (Fig.~\ref{fig:lossy3}).
In that case, the probability of winning is clearly lower than when the delay line is perfect (Fig.~\ref{fig:lossy4}), because Alice is now restricted to lossy state preparation instead of ideal state preparation.

This reduction shows that Alice’s maximum winning probability when Bob is using a lossy delay line is always lower than when Bob’s delay line is perfect, independently of the efficiency $\eta_d$ of his detectors.\\

Moreover, Alice's maximum cheating probability and optimal cheating strategy may be inferred from the case where Bob has a perfect delay line, as we show in what follows.
By convexity of the probabilities, Alice's best strategy is to send a pure state $\ket\psi=\sum_{k,l\geqslant0}{\psi_{kl}\ket{kl}}$. Let us denote by $W$ the interferometer depicted in Fig.~\ref{fig:lossy1}, including the detection losses. Let us consider the evolution of Alice's state and the vacuum on the third input mode through the interferometer $W$. The creation operator for the first mode evolves as
\be
\ba
\hat a_1^\dag&\rightarrow\sqrt z\hat a_1^\dag+\sqrt{1-z}\hat a_2^\dag\\
&\rightarrow\sqrt{z\eta_d}\hat a_1^\dag+\sqrt{(1-z)\eta_d}\hat a_2^\dag\\
&=W\hat a_1^\dag W^\dag,
\ea
\ee
while the creation operator for the second mode evolves as
\be
\ba
\hat a_2^\dag&\rightarrow\sqrt y\hat a_2^\dag+\sqrt{1-y}\hat a_3^\dag\\
&\rightarrow\sqrt{y\eta_f}\hat a_2^\dag+\sqrt{1-y}\hat a_3^\dag\\
&\rightarrow\sqrt{y(1-z)\eta_f}\hat a_1^\dag-\sqrt{yz\eta_f}\hat a_2^\dag+\sqrt{1-y}\hat a_3^\dag\\
&\rightarrow\sqrt{y(1-z)\eta_f\eta_d}\hat a_1^\dag-\sqrt{yz\eta_f\eta_d}\hat a_2^\dag+\sqrt{(1-y)\eta_d}\hat a_3^\dag\\
&=W\hat a_2^\dag W^\dag.
\ea
\ee
Hence, the output state (before the ideal threshold detection) is given by
\be
\ba
W\ket{\psi0}&=W\sum_{k,l\geqslant0}{\psi_{kl}\ket{kl0}}\\
&=W\left[\sum_{k,l\geqslant0}{\frac{\psi_{kl}}{\sqrt{k!l!}}(\hat a_1^\dag)^k(\hat a_2^\dag)^l}\right]\ket{000}\\
&=\left[\sum_{k,l\geqslant0}{\frac{\psi_{kl}}{\sqrt{k!l!}}(W\hat a_1^\dag W^\dag)^k(W\hat a_2^\dag W^\dag)^l}\right]\ket{000}\\
&=\left[\sum_{k,l\geqslant0}{\frac{\psi_{kl}}{\sqrt{k!l!}}(\sqrt{z\eta_d}\hat a_1^\dag+\sqrt{(1-z)\eta_d}\hat a_2^\dag)^k(\sqrt{y(1-z)\eta_f\eta_d}\hat a_1^\dag-\sqrt{yz\eta_f\eta_d}\hat a_2^\dag+\sqrt{(1-y)\eta_d}\hat a_3^\dag)^l}\right]\ket{000}.
\ea
\ee
Now Alice's maximum cheating probability is given by
\be
P_d^{(A)}=\mathrm{Tr}[W\ket{\psi0}\bra{\psi0}W^\dag(\mathbb1-\ket0\bra0)\ket{00}\bra{00}].
\ee
Hence, the state after a successful projection $(\mathbb1-\ket0\bra0)\ket{00}\bra{00}$, which has norm $P_d^{(A)}$, reads
\be
\left[\sum_{k+l>0}{\frac{\psi_{kl}}{\sqrt{k!l!}}(z\eta_d)^{k/2}[y(1-z)\eta_f\eta_d]^{l/2}(\hat a_1^\dag)^{k+l}}\right]\ket{000}.
\ee
When Bob has a perfect delay line ($\eta_f=1$) this state reads
\be
\left[\sum_{k+l>0}{\frac{\psi_{kl}}{\sqrt{k!l!}}(z\eta_d)^{k/2}[y(1-z)\eta_d]^{l/2}(\hat a_1^\dag)^{k+l}}\right]\ket{000},
\ee
and its norm is the winning probability of Alice in that case. Hence,
\be
P_d^{(A)}[\eta_f,\eta_d,y,z]=P_d^{(A)}[1,\eta_d,y\eta_f,z],
\label{etamPd}
\ee
i.e.\@ we can obtain Alice's cheating probability by solving the case with perfect delay line, and replacing the parameter $y$ by $y\eta_f$. In the following, we thus derive Alice's optimal strategy in that case.

\subsection{Perfect delay line}

Let $\sigma$ be the state sent by Alice, and $\eta_d$ the detector efficiency. She needs to maximize the probability of the overall outcome $(1,0,0)$ at the output of the interferometer depicted in Fig.~\ref{fig:eta1}, hence the overlap with the projector:

\begin{equation}
\Pi_{(1,0,0)}^{\eta_d}=\left[\mathbb{1}-\sum_m(1-\eta_d)^m\ket{m}\bra{m}\right]\otimes\left[\sum_{n,p}{(1-\eta_d)^{n+p}\ket{n}\bra{n}\otimes\ket{p}\bra{p}}\right].
\label{Pi_100eta}
\end{equation}

By convexity of the probabilities, we may assume without loss of generality that Alice sends a pure state $\sigma=\ket\psi\bra\psi$.
Moreover, the imperfect threshold detectors of quantum efficiency $\eta_d$ can be modelled by mixing the state to be measured with the vacuum on a beam splitter of transmission amplitude $\eta_d$ followed by an ideal threshold detection~\cite{ferraro2005gaussian}. In that case, this corresponds to losses $\eta_d$ on modes $1$, $2$, and $3$, followed by ideal threshold detections. By Lemma~\ref{lem:commut}, commuting the losses back through the interferometer leads to the equivalent picture depicted in Fig.~\ref{fig:eta2}, where the losses on input mode $3$ have been omitted, since the input state is the vacuum.

In that case, Alice's probability of winning is clearly lower than when the threshold detectors are perfect (Fig.~\ref{fig:Bhonest}), because she is restricted to lossy state preparation instead of ideal state preparation. Let $\ket{\tilde\psi}$ be the lossy state obtained by applying losses $\eta_d$ on both modes of Alice's prepared state $\ket\psi$. Alice's winning probability may then be written:
\begin{equation}
\ba
    P_d^{(A)}&=\mathrm{Tr}[U(\ket{\tilde\psi}\bra{\tilde\psi}\otimes\ket0\bra0)U^\dag(\mathbb{1}-\ket0\bra0)\otimes\ket{00}\bra{00}]\\
    &=\mathrm{Tr}[U(\ket{\tilde\psi}\bra{\tilde\psi}\otimes\ket0\bra0)U^\dag(\mathbb{1}\otimes\ket{00}\bra{00})]-\mathrm{Tr}[U(\ket{\tilde\psi}\bra{\tilde\psi}\otimes\ket0\bra0)U^\dag\ket{000}\bra{000}],
\ea
\end{equation}
where $U=(H^{(z)}\otimes \mathbb{1})(\mathbb{1}\otimes H^{(y)})$ is the unitary corresponding to the general interferometer of the lossless protocol. By Lemma~\ref{lem:reductionUV}, we have
\begin{equation}
\mathrm{Tr}[(\tau\otimes\ket0\bra0)U^\dag(\mathbb{1}\otimes\ket{00}\bra{00})U]=\mathrm{Tr}[(\tau\otimes\ket0\bra0)V^\dag(\ket{0}\bra{0}\otimes \mathbb{1}\otimes\ket{0}\bra{0})V],
\end{equation}
for any density matrix $\tau$, where $V=(\mathbb{1}\otimes H^{(b)})(H^{(a)}\otimes \mathbb{1})(\mathbb{1}\otimes R(\pi)\otimes \mathbb{1})$, with $a=\frac{y(1-z)}{y+z-yz}$ and $b=y+z-yz$, and $R(\pi)$ a phase shift of $\pi$ acting on mode $2$. Hence,
\begin{equation}
    P_d^{(A)}=\mathrm{Tr}[V(\ket{\tilde\psi}\bra{\tilde\psi}\otimes\ket0\bra0)V^\dag(\ket{0}\bra{0}\otimes \mathbb{1}\otimes\ket{0}\bra{0})]-\mathrm{Tr}[\ket{\tilde\psi}\bra{\tilde\psi}\ket{00}\bra{00}],
\end{equation}
where we used $U^\dag\ket{000}=\ket{000}$ for the second term. Setting $\ket{\tilde\psi_x}=(H^{(a)}\otimes \mathbb{1})(\mathbb{1}\otimes R(\pi))\ket{\tilde\psi}$ yields
\begin{equation}
    P_d^{(A)}=\underbrace{ \mathrm{Tr}[(\ket{\tilde\psi_x}\bra{\tilde\psi_x}\otimes\ket0\bra0)(\mathbb{1}\otimes H^{(b)})(\ket{0}\bra{0}\otimes \mathbb{1}\otimes\ket{0}\bra{0})(\mathbb{1}\otimes H^{(b)})]}_{\equiv P_1}-\underbrace{\mathrm{Tr}[\ket{\tilde\psi_x}\bra{\tilde\psi_x}\ket{00}\bra{00}]}_{\equiv P_2},
    \label{P1P2}
\end{equation}
where we used $\ket{00}=(\mathbb{1}\otimes R(\pi))H^{(a)}\ket{00}$ for the second term $P_2$.

Let us consider the first term $P_1$. Since $\ket{\tilde\psi}$ is the state obtained by applying losses $\eta_d$ on both modes of the state $\ket\psi$, we obtain the equivalent picture in Fig.~\ref{fig:eta3}, where we have added losses $\eta_d$ also on mode $3$, since the input state is the vacuum.

Let $\ket{\psi_x}=H^{(a)}(\mathbb{1}\otimes R(\pi))\ket\psi$. With Lemma~\ref{lem:commut}, commuting the losses $\eta_d$ to the output of the interferometer in Fig.~\ref{fig:eta3}, and combining the losses on mode $2$ and $3$ yields
\begin{equation}
    P_1=\mathrm{Tr}[\ket{\psi_x}\bra{\psi_x}\Pi_{(0)}^{\eta_d}\otimes\Pi_{(0)}^{\eta_d(1-b)}],
\end{equation}
where $\Pi_{(0)}^\eta$ is the POVM element corresponding to no click for a threshold detector of quantum efficiency $\eta$ (recall that this is the same as an ideal detector preceded by a mixing with the vacuum on a beam splitter of transmission amplitude $\eta$). The same reasoning for the second term $P_2$ gives
\begin{equation}
    P_2=\mathrm{Tr}[\ket{\psi_x}\bra{\psi_x}\Pi_{(0)}^{\eta_d}\otimes\Pi_{(0)}^{\eta_d}],
\end{equation}
and we finally obtain with Eq.~(\ref{P1P2}),
\begin{equation}
    P_d^{(A)}=\mathrm{Tr}[\ket{\psi_x}\bra{\psi_x}\Pi_{(0)}^{\eta_d}\otimes(\Pi_{(0)}^{\eta_d(1-b)}-\Pi_{(0)}^{\eta_d})].
\end{equation}
\begin{figure}[h!]
	\begin{center}
		\includegraphics[width=4in]{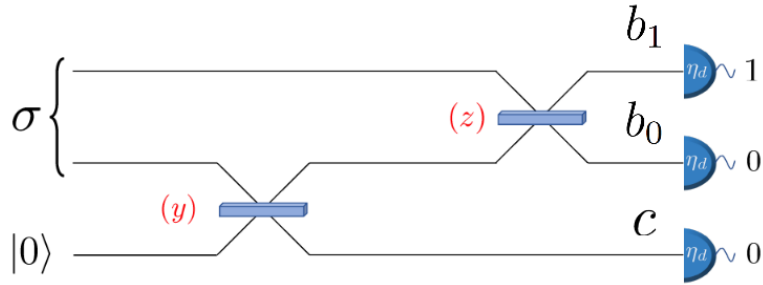}
		\caption{Alice aims to maximize the outcome $(1,0,0)$ by sending the state $\sigma$. The quantum efficiency of the detectors is indicated in white.}
		\label{fig:eta1}
	\end{center}
\end{figure}
\begin{figure}[h!]
	\begin{center}
		\includegraphics[width=4in]{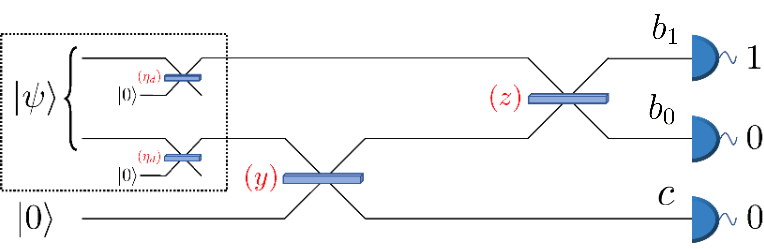}
		\caption{The quantum efficiency are modelled as losses $\eta_d$ on modes $1$, $2$, and $3$, which are then commuted through the interferometer, back to Alice's state preparation. The losses on input mode $3$ can be omitted since the input state is the vacuum.}
		\label{fig:eta2}
	\end{center}
\end{figure}
\begin{figure}[h!]
	\begin{center}
		\includegraphics[width=4in]{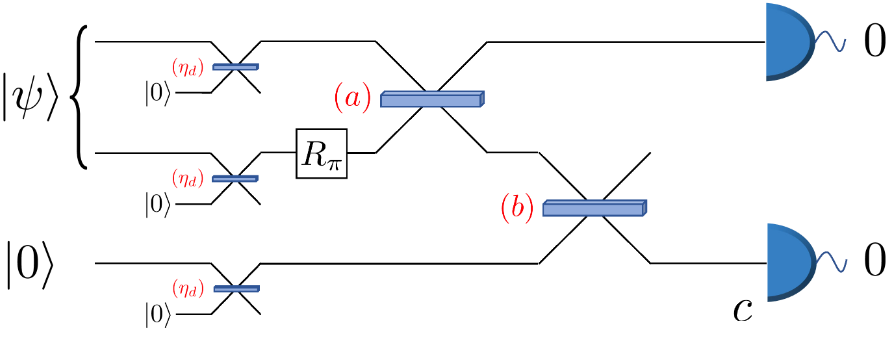}
		\caption{An equivalent picture for the first term $P_1$ of Eq.~(\ref{P1P2}). The term $P_1$ is the probability of the simultaneous outcomes $0$ for modes $1$ and $3$.}
		\label{fig:eta3}
	\end{center}
\end{figure}

Let us write $\ket{\psi_x}=\sum_{k,l\ge0}^{+\infty}{\psi_{kl}\ket{kl}}$. With the expression of the POVM in Eq.~(\ref{Pi_100eta}) the last equation reads
\be
\ba
P_d^{(A)}&=\sum_{k,l\ge0}{|\psi_{kl}|^2(1-\eta_d)^k[(1-\eta_d(1-b))^l-(1-\eta_d)^l]}\\
&\leqslant\max_{k,l\geqslant0}{(1-\eta_d)^k[(1-\eta_d(1-b))^l-(1-\eta_d)^l]}\sum_{k,l\geqslant0}{|\psi_{kl}|^2}\\
&=\max_{k,l\geqslant0}{(1-\eta_d)^k[(1-\eta_d(1-b))^l-(1-\eta_d)^l]}\\
&=\max_{l\geqslant1}{[(1-\eta_d(1-b))^l-(1-\eta_d)^l]}\\
&=\max_{l\geqslant1}{[(1-\eta_d(1-y)(1-z))^l-(1-\eta_d)^l]},
\ea
\ee
where we used $b=y+z-yz$. Let $l_0\in\mathbb N^*$ such that $\max_{l\geqslant1}{[(1-\eta_d(1-b))^l-(1-\eta_d)^l]}=(1-\eta_d(1-b))^{l_0}-(1-\eta_d)^{l_0}$. This last expression is an upperbound for $P_d^{(A)}$, which is attained for $\psi_{kl}=\delta_{k,0}\delta_{l,l_0}$, i.e.\@ $\ket{\psi_x}=\ket{0l_0}$.
Thus, the best strategy for Alice is to send the state
\be
\ba
\ket\psi&=(\mathbb{1}\otimes R(\pi))H^{(a)}\ket{\psi_x}\\
&=(\mathbb{1}\otimes R(\pi))H^{(a)}\ket{0l_0},
\ea
\ee
where $a=\frac{y(1-z)}{y+z-yz}$, and her winning probability is then
\be
P_d^{(A)}=(1-\eta_d(1-y)(1-z))^{l_0}-(1-\eta_d)^{l_0},
\ee
when Bob has a perfect delay line.
Recalling Eq.~(\ref{etamPd}), the best strategy for Alice when Bob has a lossy delay line of efficiency $\eta_f$ is to send the state
\be
\ba
\ket\psi&=(\mathbb{1}\otimes R(\pi))H^{(a)}\ket{\psi_x}\\
&=(\mathbb{1}\otimes R(\pi))H^{(a)}\ket{0l_1},
\ea
\ee
where $a=\frac{y(1-z)\eta_f}{y\eta_f+z-yz\eta_f}$, and $l_1\in\mathbb N^*$ maximizes $(1-\eta_d(1-y\eta_f)(1-z))^l-(1-\eta_d)^l$. Her winning probability is then
\be
\ba
P_d^{(A)}&=\max_{l>0}\left[\left(1-(1-y\eta_f)(1-z)\eta_d\right)^l-\left(1-\eta_d\right)^l\right]\\
&=(1-\eta_d(1-y\eta_f)(1-z))^{l_1}-(1-\eta_d)^{l_1}\\
&=\eta_d[1-(1-y\eta_f)(1-z)]\sum_{j=0}^{l_1-1}{(1-\eta_d)^j(1-\eta_d(1-y\eta_f)(1-z))^{l_1-j-1}}\\
&\leqslant\eta_d[1-(1-y\eta_f)(1-z)]\sum_{j=0}^{l_1-1}{(1-\eta_d)^j}\\
&=\eta_d[1-(1-y\eta_f)(1-z)]\frac{1-(1-\eta_d)^{l_1}}{1-(1-\eta_d)}\\
&=[1-(1-y\eta_f)(1-z)][1-(1-\eta_d)^{l_1}]\\
&\leqslant1-(1-y\eta_f)(1-z)\\
&\leqslant 1-(1-y)(1-z),
\label{eq:pda}
\ea
\ee
and this last expression is the winning probability when there are no losses.

\medskip

Let us derive the value of $l_1$. For this, we define:
\be
\ba
&r = 1-\eta_d(1-y\eta_f)(1-z)\\
&s = 1-\eta_d.\\
\ea
\ee
We then consider a $\lambda_1\in\mathbb{R}^{*+}$ which maximizes $(r^\lambda-s^\lambda)$ for $\lambda\in\mathbb{R}^{*+}$. We have that:
\be
\ba
&\frac{d}{d \lambda_1}(r^{\lambda_1}-s^{\lambda_1})=0 \Leftrightarrow \lambda_1 = \frac{\ln{\ln{s}}-\ln{\ln{r}}}{\ln{r}-\ln{s}},
\ea
\ee
for strictly non-zero $r$ and $s$ and where $ln$ denotes the complex logarithm function. This allows to deduce:
\begin{equation}
   l_1 = \left\{
    \begin{array}{ll}
     &\text{floor}(\lambda_1) \:\:\:\:\:\:\:\: \text{if} \:\:\:\:\:\:\:\: r^{\text{floor}(\lambda_1)}-s^{\text{floor}(\lambda_1)}\geqslant r^{\text{ceil}(\lambda_1)}-s^{\text{ceil}(\lambda_1)}\\
     &\text{ceil}(\lambda_1)\:\:\:\:\:\:\:\:\:\: \text{if} \:\:\:\:\:\:\:\: r^{\text{ceil}(\lambda_1)}-s^{\text{ceil}(\lambda_1)} \geqslant r^{\text{floor}(\lambda_1)}-s^{\text{floor}(\lambda_1)}.\\
    \end{array}
\right.
\end{equation}


\section{Solving the system from Eq.~(\ref{eq:systemz})}
\label{sec:system}

\subsection{Condition (i)}

The first condition enforces a fair protocol, i.e.\@ $P_{h}^{(A)}=P_{h}^{(B)}$. With Eqs.~(\ref{eq:phb}) and (\ref{eq:pha}), we aim to solve for $y$ as a function of $x$ and $z$:
\begin{equation}
    \begin{aligned}
   (i) \Leftrightarrow &\:\:\eta_t\eta_d^{(B)}\left(\sqrt{xz\eta_f^{(A)}}+\sqrt{(1-x)y(1-z)\eta_f^{(B)}}\right)^2 =  \eta_t\eta_d^{(B)}(1-x)(1-y)\\
  (i) \Leftrightarrow &\:\: (1-x)\left[(1-z)\eta_f^{(B)}+1\right]y+2\sqrt{x(1-x)z(1-z)\eta_f^{(A)}\eta_f^{(B)}}\sqrt{y}+xz\eta_f^{(A)}-(1-x)=0.
    \end{aligned}
    \label{eqeq}
\end{equation}
We make the substitution $Y = \sqrt{y}$ in order to transform Eq. (\ref{eqeq}) into a second-order polynomial equation. We then take only the positive solution (since $y$ must be positive) which reads:
\begin{equation}
    Y = \frac{\sqrt{xz(1-z)\eta_f^{(A)}\eta_f^{(B)}-\left[(1-z)\eta_f^{(B)}+1\right]\left[xz\eta_f^{(A)}-(1-x)\right]}-\sqrt{xz(1-z)\eta_f^{(A)}\eta_f^{(B)}}}{\sqrt{1-x}\left[(1-z)\eta_f^{(B)}+1\right]}.
    \label{eq:bigY}
\end{equation}
We may finally write:
\begin{equation}
    (i) \Leftrightarrow \:\: y=f\left(x,z,\eta_f^{(i)},\eta_d,\eta_t\right),
\end{equation}
where $f\left(x,z,\eta_f^{(i)},\eta_d,\eta_t\right)=\frac{\left(\sqrt{(1-x)\left[(1-z)\eta_f^{(B)}+1\right]-xz\eta_f^{(A)}}-\sqrt{xz(1-z)\eta_f^{(A)}\eta_f^{(B)}}\right)^2}{(1-x)\left[(1-z)\eta_f^{(B)}+1\right]^2}$.

Note that $y$ should be a real number, and hence we require that the expression under the first square root of $f\left(x,z,\eta_f^{(i)},\eta_d,\eta_t\right)$ is positive, i.e.:
\begin{equation}
    z\leqslant \frac{(1-x)(1+\eta_f^{(B)})}{x\eta_f^{(A)}+(1-x)\eta_f^{(B)}}.
\end{equation}
Furthermore, note that, for $\eta_f^{(A)}=\eta_f^{(B)}=\eta_f$, $y$ should be an increasing function of $\eta_f$, and therefore a decreasing function of $d$ when $\eta_f=10^{-\frac{0.2}{10}2d}$. Mathematically speaking, this is to prevent $y'(d)\rightarrow \infty$ and $y(d)>1$. Physically speaking, this condition ensures that, as the probability of transmitting the photon (and of preserving it for verification) gets smaller, Bob should encourage a detection on the third mode, which evens out the honest probabilities of winning.

\subsection{Condition (ii)}

The second condition enforces a balanced protocol, i.e.\@ $P_d^{(A)}=P_d^{(B)}$. With Eqs.~(\ref{eq:pdb}) and (\ref{eq:pda}), this translates into the following expression for $x$:
\begin{equation}
    (ii) \Leftrightarrow  x=g\left(y,z,\eta_f^{(i)},\eta_d^{(i)}\right),
\end{equation}
where
\begin{equation}
g\left(y,z,\eta_f^{(i)},\eta_d^{(i)}\right)=\frac1{\eta_f^{(A)}\eta_d^{(A)}}\left[1-\max_{l\geqslant1}{[(1-\eta_d^{(B)}(1-y\eta_f^{(B)})(1-z))^l-(1-\eta_d^{(B)})^l]}\right].
\end{equation}

\subsection{Condition (iii)}

We recall the general coin flipping formalism from \cite{HW:TCC11}, in which any classical or quantum coin flipping protocol may be expressed as:

\begin{equation}
    CF\left(p_{00},p_{11},p_{*0},p_{*1},p_{0*},p_{1*}\right),
\end{equation}
where $p_{ii}$ is the probability that two honest players output value $i\in\{0,1\}$, $p_{*i}$ is the probability that Dishonest Alice forces Honest Bob to declare outcome $i$, and $p_{i*}$ is the probability that Dishonest Bob forces Honest Alice to declare outcome $i$. In this formalism, a perfect SCF protocol can then be expressed as  $CF\left(\frac{1}{2},\frac{1}{2},\frac{1}{2},\frac{1}{2},\frac{1}{2},\frac{1}{2}\right)$, while a perfect WCF may be expressed as  $CF\left(\frac{1}{2},\frac{1}{2},\frac{1}{2},1,1,\frac{1}{2}\right)$. We may now express our quantum WCF protocol in the lossless setting as:
\begin{equation}
    CF\left(\frac{1}{2},\frac{1}{2},\left[\frac{1}{2(1-x)}\right],1,1,[1-x]\right).
\end{equation}
In the lossy setting, note that the probabilities that Alice and Bob each choose to lose (i.e. $p_{*1}$ and $p_{0*}$, respectively), both remain $1$. When Dishonest Bob chooses to lose, he may always declare outcome $0$ regardless of what he detects, which yields $p_{0*}=1$. When Dishonest Alice chooses to lose, she may send a state $\ket n$ to Bob, and so:
\be
\ba
p_{*1}&=\mathrm{Tr}\left[H^{(y)}\ket{n0}\bra{n0}H^{(y)}I\otimes(I-\Pi_0)\right]\\
&=1-\mathrm{Tr}\left[H^{(y)}\ket{n0}\bra{n0}H^{(y)}(I\otimes\Pi_0)\right],
\ea
\ee
where $\Pi_0=\sum_{l\ge0}(1-\eta)^l\ket l\bra l$ and $H^{(y)}=\begin{pmatrix}
\sqrt y & \sqrt {1-y}\\\sqrt{1-y} & -\sqrt y
\end{pmatrix}$. 

\medskip

\noindent Now,
\be
\ba
H^{(y)}\ket{n0}&=H^{(y)}\frac{(\hat a_1^\dag)^n}{\sqrt{n!}}\ket{00}\\
&=\frac1{\sqrt{n!}}(\sqrt y\hat a_1^\dag+\sqrt{1-y}\hat a_2^\dag)^n\ket{00}\\
&=\frac1{\sqrt{n!}}\sum_{k=0}^n\binom nky^{\frac k2}(1-y)^{\frac{n-k}2}\hat a_1^{\dag k}\hat a_2^{\dag(n-k)}\ket{00}\\
&=\sum_{k=0}^n\sqrt{\binom nky^k(1-y)^{n-k}}\ket{k\text{ }(n-k)}.
\ea
\ee
We thus obtain, by linearity of the trace:
\be
\ba
p_{*1}&=1-\sum_{l,l'\ge0}(1-\eta)^l\sum_{k,k'=0}^n\sqrt{\binom nky^k(1-y)^{n-k}}\sqrt{\binom n{k'}y^{k'}(1-y)^{n-k'}}\mathrm{Tr}\left[\ket{k\text{ }(n-k)}\bra{k'\text{ }(n-k')}\ket{l'l}\bra{l'l}\right]\\
&=1-\sum_{k=0}^n(1-\eta)^{n-k}\binom nky^k(1-y)^{n-k}\\
&=1-\left[y+(1-\eta)(1-y)\right]^n,
\ea
\ee
which goes to $1$ when $n$ goes to infinity, for $y<1$. Hence, in the lossy setting, the protocol becomes a:
\begin{equation}
    CF\left(P_{h}^{(A)},P_{h}^{(B)},P_d^{(A)},1,1,P_d^{(B)}\right),
\end{equation}
where $P_d^{(A)}=\max_{l>0}\left(1-(1-y\eta_f^{(A)})(1-z)\eta_d^{(B)}\right)^l-\left(1-\eta_d^{(B)}\right)^l$ and $P_d^{(B)} = 1-x\eta_f^{(A)}\eta_d^{(A)}$.

Using Theorem $1$ from \cite{HW:TCC11}, there exists a classical protocol that implements an information-theoretically secure coin flip with our parameters if and only if the following conditions hold:

\be
\begin{cases}
P_h^{(A)}\le P_d^{(A)}\\
P_h^{(B)}\le P_d^{(B)}\\
P_{ab}=1-P_h^{(A)}-P_h^{(B)}\ge(1-P_d^{(A)})(1-P_d^{(B)}).
\label{system_qc}
\end{cases}
\ee
%
%
Our quantum protocol therefore presents an advantage over classical protocols if at least one of these conditions \textit{cannot} be satisfied. Since we are interested in fair and balanced protocols, setting $P_{h}=P_{h}^{(A)}=P_{h}^{(B)}$ and $P_{d}=P_{d}^{(A)}=P_{d}^{(B)}$ allows to rewrite (\ref{system_qc}) as: 
\be
\begin{cases}
P_h\le P_d\\
P_{ab}=1-2P_h\ge(1-P_d)^2 \Leftrightarrow P_h\le\frac12[1-(1-P_d)^2].
\end{cases}
\ee
Let us finally remark that for all $x$ we have $\frac12[1-(1-x)^2]=x-\frac{x^2}2\le x$, so the first inequality above is implied by the second. The system is thus equivalent to the second inequality:
\be
P_{ab}=1-2P_h\ge(1-P_d)^2,
\ee
provided that $P_h^{(A)}=P_h^{(B)}=P_h$ and $P_d^{(A)}=P_d^{(B)}=P_d$.

In order to get a clearer insight into the meaning of quantum advantage, we express this condition in terms of cheating probability: our protocol displays quantum advantage if and only if the lowest classical cheating probability

\be
P_d^{C}=1-\sqrt{1-2P_h}=1-\sqrt{P_{ab}}
\ee
exceeds our quantum cheating probability $P_d^{Q}$. 

%

%

\section{Practical quantum advantage for various detection efficiencies}
\label{sec:eta}

In this section, we plot the numerical solutions to the system  from Eq.~(\ref{eq:systemz}) in order to display quantum advantage as a function of distance for various detection efficiencies.  Numerical values for the lowest classical and quantum cheating probabilities, $P_d^{C}$ and $P_d^{Q}$, are plotted as a function of distance $d$ in blue and red, respectively. Our quantum protocol performs strictly better than any classical protocol when  $P_d^{Q}<P_d^{C}$. We set $\eta_f=\eta_s\eta_t^2$, where $\eta_s$ is the fiber delay transmission corresponding to $500$ns of optical switching time, and $\eta_t^2=\left(10^{-\frac{0.2}{10}d}\right)^2$ is the fiber delay transmission associated with travelling distance $d$ twice (once for quantum, once for classical) in single-mode fibers with attenuation $0.2$ dB/km.

\begin{figure}
	\begin{center}
		\includegraphics[width=0.92\columnwidth]{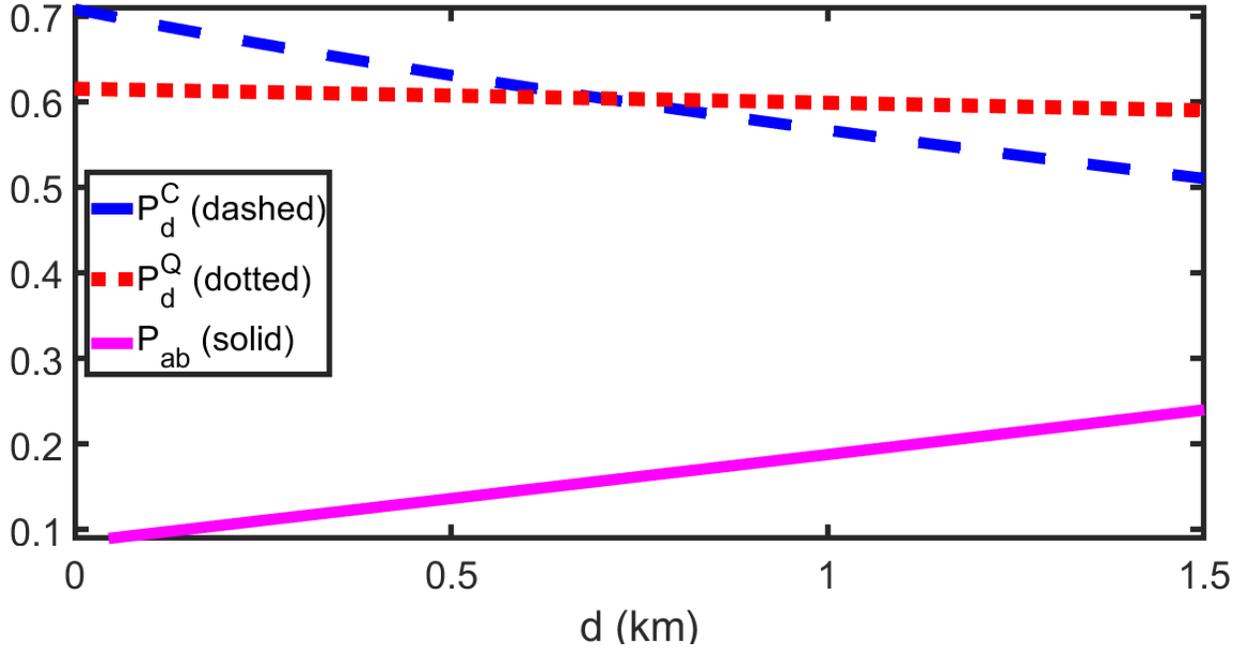}
		\caption{Parameters $\eta_d=0.95$ and $z=0.57$. Note that honest abort probability $P_{ab}$ is plotted in magenta. }
		\label{fig:performance1}
	\end{center}
\end{figure}

\begin{figure}
	\begin{center}
		\includegraphics[width=0.92\columnwidth]{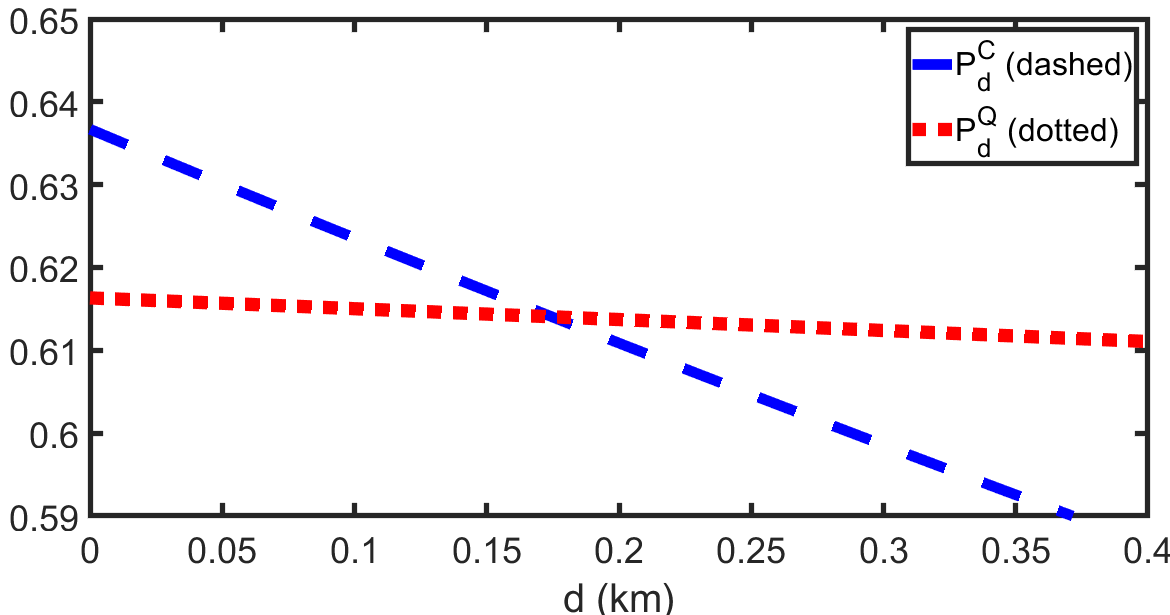}
		\caption{Parameters $\eta_d=0.90$ and $z=0.63$. Note that honest abort probability has been omitted in order to zoom in, but it lies around $0.15$ for these distances.}
		\label{fig:performance2}
	\end{center}
\end{figure}

\end{appendix}

\end{document}